\newcommand{\bigJobs}[1]{{J_B(#1)}}
\newcommand{\smallJobs}[1]{{J_S(#1)}}
\newcommand{\allJobs}{J}
\let\doendproof\endproof
\renewcommand\endproof{~\hfill\qed\doendproof}
\begin{document}
\title{Improved Algorithms for Monotone Moldable Job Scheduling using Compression and Convolution\thanks{Supported by DFG-Project JA 612 /25-1.}}
\titlerunning{Monotone Moldable Job Scheduling using Compression and Convolution}
%
\author{Kilian Grage \inst{} \and
Klaus Jansen\inst{} \and Felix Ohnesorge} 
\authorrunning{K. Grage et al.}
%
\institute{Kiel University, Christian-Albrechts-Platz 4, 24118 Kiel, Germany 
\email{\{kig,kj\}@informatik.uni-kiel.de, felix-eutin@gmx.de}}
\maketitle              
\begin{abstract}
In the moldable job scheduling problem one has to assign a set of $n$ jobs to $m$ machines, in order to minimize the time it takes to process all jobs. Each job is moldable, so it can be assigned not only to one but any number of the equal machines. We assume that the work of each job is monotone and that jobs can be placed non-contiguously. In this work we present a $(\frac 3 2 + \epsilon)$-approximation algorithm with a worst-case runtime of ${O(n \log^2(\frac 1 \epsilon + \frac {\log (\epsilon m)} \epsilon) +  \frac{n}{\epsilon} \log(\frac 1 \epsilon) {\log (\epsilon m)})}$ when $m\le 16n$. This is an improvement over the best known algorithm of the same quality by a factor of $\frac 1 \epsilon$ and several logarithmic dependencies. We complement this result with an improved FPTAS with running time $O(n \log^2(\frac 1 \epsilon + \frac {\log (\epsilon m)} \epsilon))$ for instances with many machines $m> 8\frac n \epsilon$. This yields a $\frac 3 2$-approximation with runtime $O(n \log^2(\log m))$ when $m>16n$.

We achieve these results through one new core observation: In an approximation setting one does not need to consider all $m$ possible allotments for each job. We will show that we can reduce the number of relevant allotments for each job from $m$ to $O(\frac 1 \epsilon + \frac {\log (\epsilon m)}{\epsilon})$. Using this observation immediately yields the improved FPTAS. For the other result we use a reduction to the knapsack problem first introduced by Mounié, Rapine and Trystram. We use the reduced number of machines to give a new elaborate rounding scheme and define a modified version of this this knapsack instance. This in turn allows for the application of a convolution based algorithm by Axiotis and Tzamos. We further back our theoretical results through a practical implementation and compare our algorithm to the previously known best result. 

\keywords{machine scheduling \and moldable \and compression \and convolution.}
\end{abstract}
\section{Introduction}
The machine scheduling problem, where one assigns jobs to machines in order to finish all jobs in a preferably short amount of time, has been a core problem of computer science. Its applications are not only limited to the usual context of executing programs on a range of processor cores but it also has many applications in the real world. For example one can view machines as workers and jobs as tasks or assignments that need to be done. It is realistic in this setting that multiple workers can work on one task together to solve it more quickly.
This however gives rise to another layer of this problem, where one has to initially assign a number of machines to each job and a starting time, leading to the problem called Parallel Task Scheduling with Moldable Jobs. Our goal is to minimize the time when the last job finishes, which is called the \textit{makespan}.\looseness=-1

In this problem the time necessary for a job to be processed is dependent on the number of assigned machines. We further consider in this paper the setting where our jobs are not only moldable but also have monotone work. The work of a job $j$ with $k$ machines is defined as $w(j,k):= t(j,k)\cdot k$, which intuitively is the area of the job. We assume that this function for a fixed job $j$ is non-decreasing in the number of machines. This assumption is natural since distributing the task on multiple machines will not reduce the amount of work but actually induce a bit of overhead due to communication among the machines.\looseness=-1

Since finding an optimal solution to this problem is NP-hard \cite{jansenLandNP13} our goal is to present approximation algorithms. Such an algorithm has to guarantee for every instance $I$ with optimal makespan $OPT(I)$ to find a solution with a makespan of at most $c \cdot OPT(I)$ for some multiplicative approximation ratio $c>1$. In this paper we introduce two algorithms that work with an accuracy $\epsilon >0$: The first guarantees an approximation ratio of $c_1 =1+\epsilon$ in time $O(n \log^2(\frac 4 \epsilon + \frac {\log (\epsilon m)} \epsilon))$ under the additional premise that $m>8 \frac n \epsilon$. Our second algorithm achieves an approximation ratio of $c_2 = \frac 3 2 + \epsilon$ with running time 
$O(n \log^2(\frac 1 \epsilon + \frac {\log (\epsilon m)} \epsilon) +  \frac{n}{\epsilon} \log(\frac 1 \epsilon) {\log (\epsilon m)})$ when $16n \ge m$. If we apply the first algorithm for $\epsilon = \frac 1 2$ and combine both algorithms we get an efficient $(\frac 3 2 + \epsilon)$-approximation. \looseness=-1

We achieve our results through a new core observation: Although a job can be assigned to every possible number of machines, not all $m$ different allotments may be relevant when looking for an approximate solution. In fact we will show that if $m$ is large enough we can reduce the number of relevant machine allotments to 
 $O(\frac 1 \epsilon + \frac {\log (\epsilon m)} \epsilon)$. This overall assessment is based on the concept of \textit{compression} introduced by Jansen and Land \cite{jansenLandLinTime}.\looseness=-1 
 
 We use the reduced number of relevant allotments to schedule moldable jobs via an instance of the knapsack problem. This approach was initially introduced by Mounié, Rapine and Trystram \cite{mounieTrystram}. We give a new rounding scheme to convert moldable jobs into knapsack items to define a modified version of their knapsack instance. We construct this knapsack instance in a way that the number of different sizes and profits is small. This allows for the efficient application of a knapsack algorithm introduced by Axiotis and Tzamos \cite{axiotis19} using convolution. Their algorithm works well on such instances and thanks to our rounding we can even do the required pre-processing for their algorithm efficiently in linear time.\looseness=-1

\subsection{Problem definitions and notations}

Two problems will play an important role in this paper: The first being parallel task scheduling with moldable jobs, which we will call moldable job scheduling in the following. In this problem one is given a set $\allJobs$ of $n$ jobs and a set $M$ of $m$ equal machines. We denote with $[l] = \{i\in \mathbb{N} \,|\, 1\le i \le l\}$ for any $l\in \mathbb{N}$. The processing time of a job in the moldable setting is given through a function $t:\allJobs \times [m] \rightarrow \mathbb{R}_{\ge 0}$ where  $t(j,k)$ denotes the processing time of job $j$ on $k$ machines. We denote with  $\gamma(j,d) = \min\{i\in [m] \,|\, t(j,i) \le d\}$ the number of machines required for job $j$ to achieve processing time smaller than $d$. If $d$ is not achievable with $m$ machines, we say $\gamma(j,d) $ is undefined.\looseness=-1


For a solution of this problem we require two things: First an allotment $\alpha: \allJobs \rightarrow [m] $ and an assignment of starting times  $s: \allJobs  \rightarrow \mathbb{R}_{\ge 0}$. For simplicity we denote $\alpha_j := \alpha(j)$ and $s_j =s(j)$ respectively. A feasible solution must now fulfill that at any time at most $m$ machines are in use.  Denote with
${U(t) := \{j \in \allJobs \,|\, t \in [s_j,s_j+t(j,\alpha_j)]\} }$ the jobs that are processed at time $t$. If at all times $t\in \mathbb{R}_{\ge 0}$  we have that $\sum_{j\in U(t)}{\alpha_j} \le m$ then the schedule defined by $\alpha$ and $s$ is feasible.\looseness=-1

Finally we look to minimize the makespan of this schedule, which is the time, when the last job finishes. Given an allotment $\alpha $ and starting times $s$ the makespan is defined by 
$\max_{j\in \allJobs}\{s_j + t(j,\alpha_j)\}$. As mentioned before the work of a job is defined as $w(j,k)= k\cdot t(j,k) $. In this paper we will work under the assumption that this work function for each job is non-decreasing. More precisely for all jobs $j$ and $k,k'\in [m]$ with $k\le k' $ we have $w(j,k) \le w(j,k')$.\looseness=-1

The second main problem we will consider in this work is the knapsack problem \footnote{We mainly consider $0-1$ Knapsack, though some items may appear multiple times.}, as it will be part of our algorithm to solve a knapsack instance. In the knapsack problem one is given a set of $n$ items where each item $i$ is identified with a profit value $p_i \in \mathbb{R}_{>0}$ and a size or weight $w_i\in \mathbb{N}$. The task is to find a maximum profit subset of these items such that the total weight does not exceed a given capacity $t \in \mathbb{N}$.\looseness=-1



\subsection{Related work}

The moldable job scheduling problem is known to be NP-hard \cite{mallComplexity} even with monotone work functions \cite{jansenLandNP13}. Further there is no polynomial time approximation algorithm with a guarantee less than $\frac 3 2$ unless P=NP \cite{32Complexity}. Belkhale and Banerjee gave a 2-approximation for the problem with monotony \cite{Belkhale}, which was later improved to the non-monotone case by Turek et al.~\cite{Turek}. Ludwig and Tiwari improved the running time further \cite{ludwigTiwari} and achieved a running time polylogarithmic in $m$, which is especially important for compact input encoding, where the length of the input is dependent on $\log m$ and not $m$.\looseness=-1 

Mounié et al.~ gave a $(\frac 3 2 + \epsilon)$-approximate algorithm with running time $O(nm\log \frac 1 \epsilon)$ \cite{mounieTrystram}. Jansen and Land later improved this result further by giving an FPTAS for instances with many machines and complementing this with an algorithm that guarantees a ratio of $(\frac 3 2 + \epsilon)$ with polylogarithmic dependence on $m$. They picked up on the idea of Mounié et al.~ to use a knapsack instance to find a schedule distributing jobs in two shelfs and modified the knapsack problem to solve it more efficiently. In a recent result Wu et al.~\cite{WuZhangChen2022} gave a new $\frac 3 2 $- approximation that works in time $O(nm\log(nm))$\looseness=-1

The Knapsack problem as a generalization from Subset Sum is another core problem of computer science that is NP-hard as well. For this problem pseudopolynomial algorithms have been considered starting with Bellmans classical dynamic programming approach in time $O(nt)$ \cite{bellman}. Many new results with pseudopolynomial running times have recently been achieved in regards to various parameters such as largest item size or number of different items \cite{eisenbrand,polak21,axiotis19,bateni}.\looseness=-1

One interesting connection has come up between Knapsack and the $(\max,+)$-convolution problem. In this problem one is given two sequences of length $n$ $(a_i)_{0\le i < n}, (b_i)_{0\le i < n}$ and has to find the convolution $c = a \oplus b$ which is defined through $c_i = \max_{j\le i}(a_j +b_{i-j})$ for all $i\in \mathbb{N}_{< n}$. This problem can be solved in quadratic time $O(n^2)$. Cygan et al.~\cite{cygan} conjecture that a subquadratic algorithm may not be possible and used this conjecture as a basis for many fine-grained complexity results for Knapsack and similar problems. Axiotis and Tzamos showed that with concave sequences, convolutions can be computed in linear time $O(n)$ and they used this to give a $O(Dt)$ for Knapsack where $D$ is the number of different item sizes \cite{axiotis19}. This approach has also been used by Polak et al.~\cite{polak21} in conjunction with proximity arguments from Eisenbrand Weismantel \cite{eisenbrand} to gain fast algorithms for knapsack with small item sizes .\looseness=-1

\subsection{Our results}

We present a new algorithm, in particular a $(\frac 3 2+\epsilon)$-approximation algorithm, for any accuracy parameter $\epsilon>0$, with a runtime polynomial in $n, \frac 1 \epsilon$ and in $\log m$. Since we are polynomial in $\log m$ our algorithm will be able to handle certain compact input encodings and will generally scale well into large $m$.

The main difficulty in moldable job scheduling is that for every job we need to choose between $m$ different allotments and then schedule jobs efficiently. We will however show that not all $m$ possible allotments have to be regarded. Since we look for an approximate solution and we have monotone jobs, it is sufficient to only consider $O(\frac{1}{\epsilon} + \frac{\log(\epsilon m)}{\epsilon}))$ different machine counts. This leads immediately to a fully polynomial time approximation scheme (FPTAS) for instances with many machines.\looseness=-1


\begin{theorem}\label{Theo:FPTAS}
Let $\epsilon >0$. For moldable job scheduling with instances where ${m > 8 \frac n \epsilon}$ exists a $(1+\epsilon)$-approximation that runs in  time $O(n\log^2(\frac{1}{\epsilon} + \frac{\log(\epsilon m)}{\epsilon}))$.
\end{theorem}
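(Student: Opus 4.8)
The plan is to establish the FPTAS in two conceptual stages: first reduce the number of relevant allotments per job, then solve the resulting structured scheduling problem near-optimally and fast. For the first stage, I would invoke the core compression observation promised in the introduction, namely that for monotone jobs it suffices to retain only $O(\tfrac 1 \epsilon + \tfrac{\log(\epsilon m)}{\epsilon})$ candidate machine counts per job. Concretely, I would argue that we may restrict attention to allotments $k$ of the form $k = \lceil (1+\epsilon)^i \rceil$ (a geometric grid) together with all small values $k \le \Theta(\tfrac 1 \epsilon)$: rounding any allotment up to the next grid point increases the used machine count by at most a $(1+\epsilon)$ factor, and by monotonicity of the work function this can only decrease the processing time, while the total machine usage at any time $t$ grows by at most a $(1+\epsilon)$ factor. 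Since $m > 8\tfrac n \epsilon$, there is enough slack in the machine budget to absorb this blow-up at the cost of an $\epsilon$-term in the makespan — this is exactly where the hypothesis $m > 8\tfrac n \epsilon$ is used, and I expect this accounting (showing the rounded schedule still fits within $m$ machines at every time step, or can be repacked to do so) to be the main obstacle.

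Second, with only $q := O(\tfrac 1 \epsilon + \tfrac{\log(\epsilon m)}{\epsilon})$ relevant allotments per job, the problem becomes essentially a one-dimensional load-balancing / bin-stretching problem: since $m$ is so large relative to $n$, an optimal (or near-optimal) schedule can be taken to run each job in its own "column" of machines with a common deadline, so the only question is choosing, for each job $j$ and a guessed makespan bound $d$, the smallest allotment $\gamma(j,d)$ achieving processing time $\le d$, and then checking whether $\sum_j \gamma(j,d) \le m$. I would phrase this as a feasibility test for a guessed target makespan $d$: compute $\gamma(j,d)$ for each job by binary search over that job's $q$ relevant allotments in time $O(\log q)$ per job, hence $O(n\log q)$ total, and declare $d$ feasible iff the sum of required machines is at most $m$ (up to the $(1+\epsilon)$ slack). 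This feasibility predicate is monotone in $d$.

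Third, I would run a binary search over the makespan value $d$ itself. Because the relevant candidate makespan values are exactly the $O(nq)$ distinct entries $t(j,k)$ over relevant allotments — or, after a standard rounding of processing times to powers of $(1+\epsilon)$, only $O(\log_{1+\epsilon}(\cdot)) = O(q)$ essentially-distinct scales — a binary search needs $O(\log(nq)) = O(\log q)$ iterations once we observe $\log n$ can be folded into $\log q$ under the regime $m > 8 n/\epsilon$ (equivalently $n < \epsilon m/8$, so $\log n = O(\log(\epsilon m)) = O(q)$, giving $\log\log n = O(\log q)$). Each iteration costs the $O(n\log q)$ feasibility test, for a total of $O(n\log^2 q) = O(n\log^2(\tfrac 1 \epsilon + \tfrac{\log(\epsilon m)}{\epsilon}))$, matching the claimed bound. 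The remaining work is to combine the two $\epsilon$-losses — from allotment rounding and from makespan discretization — and rescale $\epsilon$ by a constant so that the final guarantee is $(1+\epsilon)OPT$; I would also note that the preprocessing to build each job's list of $q$ relevant allotments is itself doable within the stated budget, since it only requires evaluating the geometric grid, never scanning all $m$ values.
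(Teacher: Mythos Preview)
Your high-level skeleton matches the paper: reduce to $q=O(\tfrac{1}{\epsilon}+\tfrac{\log(\epsilon m)}{\epsilon})$ candidate allotments, give an $O(n\log q)$ feasibility test for a guessed makespan $d$ (compute the minimal feasible allotment per job by binary search over $S_\rho$ and check the total against $m$), and wrap this in a search over $d$. The paper's feasibility test differs from yours in direction: it rounds $\gamma(j,d)$ \emph{down} to the grid (incurring a $(1+\epsilon)$ processing-time blow-up via the compression lemma) and then applies a second compression to the wide jobs, using the known bound $\sum_j \gamma(j,d)\le m+n$ to show that the freed machines cover the $n$-excess. Your round-\emph{up} idea keeps makespan $\le d$ but overshoots the machine budget; to fit back into $m$ you would have to compress again anyway, so you end up at essentially the same place, and you correctly flag this repacking as the delicate step you have not carried out.

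The genuine gap is the outer search and hence the final running time. Your claim that binary search over the $O(nq)$ candidate makespans costs $O(\log q)$ iterations is wrong: you argue $\log n=O(\log(\epsilon m))=O(q)$, which gives $\log\log n=O(\log q)$, but what you need is $\log n=O(\log q)$, and that fails badly (for fixed $\epsilon$ and large $m$ one has $\log n=\Theta(\log m)$ while $\log q=\Theta(\log\log m)$). The paper avoids this entirely by first computing a constant-factor approximation: it runs the Ludwig--Tiwari min--max allotment procedure restricted to $S_\rho$ (this is where the second $\log q$ factor comes from, costing $O(n\log^2 q)$) and then list-schedules to get a $4$-approximation. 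With $OPT$ localised to a factor-$4$ interval, the dual-approximation binary search needs only $O(\log(1/\epsilon))\le O(\log q)$ calls to the $O(n\log q)$ feasibility test. In other words, the $O(n\log^2 q)$ bound in the theorem is actually dominated by the cost of the initial constant-factor step, which your proposal omits altogether; without it (or an equivalent device to bound the makespan range), your scheme does not attain the stated running time.
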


This result can be used for a $\frac 3 2 $-approximation if we use $\epsilon= \frac 1 2$. 

\begin{corollary}
Consider moldable job scheduling on instances with $m >16 n$. There exists a $\frac 3 2$-approximation in time $O(n\log^2(\log m))$.
\end{corollary}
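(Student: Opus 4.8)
The plan is to obtain the corollary as a direct instantiation of Theorem~\ref{Theo:FPTAS}. Choose $\epsilon = \tfrac 1 2$. Then the hypothesis $m > 8\tfrac n \epsilon$ of Theorem~\ref{Theo:FPTAS} becomes exactly $m > 16n$, matching the assumption of the corollary, and the guarantee $1+\epsilon$ becomes $\tfrac 3 2$. Hence the approximation ratio and the machine-count condition require no work at all, and the entire content of the proof is to check that the running-time bound $O\!\left(n\log^2\!\left(\tfrac 1 \epsilon + \tfrac{\log(\epsilon m)}{\epsilon}\right)\right)$ simplifies to $O(n\log^2(\log m))$ for this choice of $\epsilon$.

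To carry this out I would first substitute $\epsilon = \tfrac 1 2$, turning the argument of the outer $\log^2$ into $2 + 2\log\tfrac m 2$. Next I would exploit that the corollary's hypothesis $m > 16n \ge 16$ (for $n \ge 1$) keeps $m$ bounded below by a constant: from $\log\tfrac m 2 \ge 3$ one gets $2 + 2\log\tfrac m 2 \le 4\log\tfrac m 2 \le 4\log m$, and applying the logarithm once more gives $\log\!\left(2 + 2\log\tfrac m 2\right) \le 2 + \log\log m \le 3\log\log m$, again using $\log\log m \ge 2$. Squaring and absorbing the constant into the $O(\cdot)$ then yields $O(n\log^2(\log m))$.

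There is essentially no obstacle here beyond this arithmetic; the one place that deserves a moment of attention is exactly where I used $m \ge 16$. The term $\tfrac 1 \epsilon$ inside the inner expression and the constant coming from $\epsilon = \tfrac 1 2 < 1$ are additive constants, and they can only be absorbed into a $\log\log m$ bound if $\log\log m$ is itself bounded away from zero; this is guaranteed precisely by $m > 16n \ge 16$. So the hypothesis $m > 16n$ is doing double duty — it both satisfies the precondition of Theorem~\ref{Theo:FPTAS} and makes the advertised running-time expression meaningful.
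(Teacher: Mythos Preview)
Your proposal is correct and matches the paper's approach exactly: the paper simply states that the corollary follows from Theorem~\ref{Theo:FPTAS} by setting $\epsilon = \tfrac 1 2$, without spelling out the running-time arithmetic you have provided. Your additional observation that the hypothesis $m > 16n \ge 16$ is needed to absorb the additive constants into the $\log\log m$ term is a nice touch that the paper leaves implicit.
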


We complement this result with an efficient $(\frac 3 2+\epsilon)$-approximation for the case where $m \le 16 n$. For this we follow the same approach as \cite{jansenLandLinTime,mounieTrystram} and we aim to construct a knapsack instance. We will introduce a new rounding scheme for machine counts, processing times and job works and convert these modified jobs into knapsack items. The resulting knapsack instance will only have a small amount of different item sizes. We then apply an algorithm introduced by Axiotis and Tzamos \cite{axiotis19} that works well on such instances. Thanks to our rounding we will be able to do the pre-processing of their algorithm in linear time as well.


\setlength{\tabcolsep}{4pt}
\begin{theorem}\label{Theo:ThreeHalf}
For moldable job scheduling there exists an algorithm that for instances with $m\le 16n$ and for any $\epsilon >0$ yields a $\frac 3 2 +\epsilon$ approximation in time: 
$O(n \log^2(\frac 1 \epsilon + \frac {\log (\epsilon m)} \epsilon) +  \frac{n}{\epsilon} \log(\frac 1 \epsilon) {\log (\epsilon m)})$

\end{theorem}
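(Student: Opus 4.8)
The plan is to follow the classical dual-approximation framework of Mounié, Rapine and Trystram combined with the compression result that underlies Theorem~\ref{Theo:FPTAS}. We perform a binary search for a target makespan $d$; for each guess $d$ we must either produce a schedule of makespan at most $(\tfrac32+\epsilon)d$ or certify that $d < OPT(I)$. Given $d$, partition the jobs into \emph{big} jobs $\bigJobs{d}$, those that in any schedule of makespan $\le d$ must run for more than $d/2$ on their assigned machine count, and \emph{small} jobs $\smallJobs{d}$. The big jobs are packed into a two-shelf structure: a shelf of height $d$ and a shelf of height $d/2$, where each big job is given either $\gamma(j,d)$ machines (placing it in the tall shelf) or $\gamma(j,d/2)$ machines (placing it in the short shelf). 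Deciding this split so that the tall shelf uses at most $m$ machines while the total work of everything placed in the short shelf plus the small jobs fits in the remaining $d/2 \times m$ area is exactly a knapsack problem: items are the big jobs, the ``size'' is the extra machines $\gamma(j,d/2)-\gamma(j,d)$ needed to demote a job to the short shelf, the capacity is $m-\sum_j \gamma(j,d)$, and the ``profit'' is the work saved by the demotion. Monotonicity of the work function is what makes this a genuine maximization that bounds the leftover area; the small jobs are then greedily placed in the residual region by a list-scheduling argument, all of which is standard from \cite{mounieTrystram,jansenLandLinTime}.

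The new ingredients, and the bulk of the work, are in making this knapsack instance have few distinct sizes and few distinct profits so that the convolution-based algorithm of Axiotis and Tzamos \cite{axiotis19} applies. First I would invoke the compression machinery: for each job $j$ the relevant machine counts $\gamma(j,d')$ for the geometric grid of values $d'$ we care about range over a set of size $O(\tfrac1\epsilon + \tfrac{\log(\epsilon m)}{\epsilon})$, so both $\gamma(j,d)$ and $\gamma(j,d/2)$, and hence the knapsack size $\gamma(j,d/2)-\gamma(j,d)$, live in a set of only that many values. Next I would round: the capacity $m-\sum_j\gamma(j,d)$ is at most $m\le 16n$, and after scaling we may assume the sizes are integers in a range of size $O(n/\epsilon)$ with only $O(\tfrac1\epsilon\log(\epsilon m))$ distinct values; the profits (works saved) are rounded to powers of $(1+\epsilon')$ for a suitable $\epsilon'=\Theta(\epsilon)$, giving $O(\tfrac1\epsilon\log(\epsilon m))$ distinct profit classes while losing only a $(1+\epsilon)$ factor that can be absorbed into the approximation guarantee. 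The Axiotis–Tzamos algorithm runs in time $O(D\cdot t)$ where $D$ is the number of distinct sizes and $t$ the capacity; here this is $O(\tfrac1\epsilon\log(\epsilon m)\cdot n/\epsilon)=O(\tfrac n\epsilon\log\tfrac1\epsilon\log(\epsilon m))$ after one more refinement grouping items of equal size and profit (so that within each size class one only keeps the $O(1/\epsilon')$ most profitable items per profit class), and the preprocessing—bucketing items by rounded size and rounded profit—is doable in $O(n)$ per guess because there are only $O(\tfrac1\epsilon\log(\epsilon m))$ buckets and radix/counting sort suffices.

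The binary search over $d$ contributes a factor of $\log$ of the ratio of the largest to smallest candidate makespan; using the compression grid this search has only $O(\log(\tfrac1\epsilon+\tfrac{\log(\epsilon m)}{\epsilon}))$ relevant breakpoints, and combined with a second (nested) binary search to pin down the exact target within the resolution of the work values this yields the $\log^2(\tfrac1\epsilon+\tfrac{\log(\epsilon m)}{\epsilon})$ factor on the cheap $O(n)$-per-step work, while the expensive knapsack solve is invoked only $O(1)$ times (or at most $O(\log\tfrac1\epsilon)$ times, which is already inside the stated bound). Summing the two regimes gives the claimed running time $O\bigl(n\log^2(\tfrac1\epsilon+\tfrac{\log(\epsilon m)}{\epsilon})+\tfrac n\epsilon\log\tfrac1\epsilon\log(\epsilon m)\bigr)$.

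I expect the main obstacle to be the rounding analysis: one must check that rounding the knapsack sizes \emph{down} (to keep feasibility of the $m$-machine constraint) and the profits to powers of $(1+\epsilon')$ together still lets the resulting two-shelf packing plus the list-scheduled small jobs fit within height $(\tfrac32+\epsilon)d$, i.e.\ that the area argument of \cite{mounieTrystram} is robust to these perturbations and that no job is forced onto more than $m$ machines. A secondary subtlety is ensuring the preprocessing genuinely runs in linear time despite the grouping-by-profit step; this relies on the distinct-size and distinct-profit counts both being $O(\tfrac1\epsilon\log(\epsilon m))$, which is exactly where the compression bound $m\le 16n$ is used to keep $\log(\epsilon m)=O(\log(\epsilon n))$ under control.
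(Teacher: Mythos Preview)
Your high-level plan—dual approximation, two-shelf reduction to knapsack, compressed machine counts, then Axiotis–Tzamos—matches the paper's approach, but several concrete steps are mis-stated in ways that break the running-time claim.

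First, the knapsack instance is not the one you describe. In the paper (following Mouni\'e et al.) an item's size is $\gamma'(j,d)$, the rounded number of machines needed in the \emph{tall} shelf, and the capacity is simply $m$; selecting an item means placing that job in $S_1$. Your formulation with size $\gamma(j,d/2)-\gamma(j,d)$ and capacity $m-\sum_j\gamma(j,d)$ does not correspond to either shelf's width constraint (the latter quantity is typically negative), and moreover a difference of two elements of $S_\rho$ can take up to $|S_\rho|^2$ distinct values, so your claim of only $O(\tfrac1\epsilon\log(\epsilon m))$ distinct item sizes would fail. With the correct formulation the sizes lie in $S_\rho$ itself, which is exactly what gives $D=|S_\rho|$.

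Second, your running-time arithmetic does not yield the stated bound. You write the Axiotis–Tzamos cost as $D\cdot t$ with $t=O(n/\epsilon)$ and $D=O(\tfrac1\epsilon\log(\epsilon m))$, then assert this equals $O(\tfrac n\epsilon\log\tfrac1\epsilon\log(\epsilon m))$; but $D\cdot t=O(\tfrac{n}{\epsilon^2}\log(\epsilon m))$, off by a factor $\tfrac1\epsilon/\log\tfrac1\epsilon$. The paper avoids this because the capacity is $m\le 16n$ with \emph{no} $1/\epsilon$ factor—the only $1/\epsilon$ sits in $D=|S_\rho|$. The ``one more refinement'' you mention can reduce the number of items but not the capacity, and Axiotis–Tzamos is already independent of the item count once the per-size arrays $R_s$ are built.

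Third, the $n\log^2(\cdot)$ term does not come from nested binary searches over $d$. In the paper it is the cost of the \emph{initial} constant-factor approximation: the Ludwig–Tiwari allotment algorithm, restricted to machine counts in $S_\rho$, runs in $O(n\log^2|S_\rho|)$ and yields a $4$-approximation that seeds the search interval. The binary search over $d$ itself has only $O(\log\tfrac1\epsilon)$ iterations, and this $\log\tfrac1\epsilon$ multiplies the per-guess knapsack cost $O(\tfrac m\epsilon\log(\epsilon m))=O(\tfrac n\epsilon\log(\epsilon m))$ to produce the second term of the theorem. Finally, the paper's profit rounding is not geometric; it is a three-case scheme (narrow/narrow, wide/wide, narrow/wide) designed so that within each case profits become small integers or integer combinations of values in $S_\rho$, enabling radix and bucket sort to build the $R_s$ arrays in $O(n+\tfrac m\epsilon)$ rather than $O(n\log n)$.
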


These two results make up one $(\frac 3 2 + \epsilon)$-approximation that improves on the best known result by Jansen and Land \cite{jansenLandLinTime} in multiple ways. For large $m$ we manage to reduce the dependency on $m$ even further. When $m$ is small we improve on their running time by reducing the dependency on $\epsilon$ by a factor of $\frac 1 \epsilon$ and several polylogarithmic factors. We also argue that our algorithm is overall simpler compared to theirs, as we do not require to solve knapsack with compressible items in a complicated manner. Instead our algorithm merely constructs the modified knapsack instance and delegates to a simple and elegant algorithm from Axiotis and Tzamos \cite{axiotis19}.


\begin{center}
	\small
	\begin{threeparttable} 
		\begin{tabular}{lll}
			\toprule
			\textsc{Result} & \textsc{Jansen \& Land \cite{jansenLandLinTime}} & \textsc{This paper} \\
			\midrule
			\textsc{$1+\epsilon$, $(m>8\frac n \epsilon)$ } 
			& $O(n \log(m)(\log(m) + \log (\frac 1 \epsilon)))$ 
			& $O(n \log^2(\frac 1 \epsilon + \frac {\log (\epsilon m)} \epsilon))$ \\
			\midrule
			\textsc{$\frac 3 2$, $(m>16n)$}
			& $O(n \log^2(m))$ 
			& $O(n \log^2(\log m))$ \\
			\midrule
			\textsc{\textsc{$\frac 3 2 +\epsilon$, $(m\le 16n)$}}  
			& $O(\frac n {\epsilon^2} \log m (\frac{\log m}{\epsilon}+\log^3(\epsilon m)))$ 
			& $O(n \log^2(\frac 1 \epsilon + \frac {\log (\epsilon m)} \epsilon) +  \frac{n}{\epsilon} \log(\frac 1 \epsilon) {\log (\epsilon m)})$\\
            \bottomrule
		\end{tabular}		
	\end{threeparttable} 
\end{center}

\section{General Techniques and FPTAS for many machines}


The core technique used in this paper is the concept of \textit{compression} introduced by Jansen and Land \cite{jansenLandLinTime}. Compression is the general idea of reducing the number of machines a job is assigned to. Due to monotony the resulting increase of processing time can be bound. \looseness=-1

\begin{lemma}[\cite{jansenLandLinTime}]\label{Lem:Comp}
Let $\rho \in (0,1/4]$ be what we denote in the following as a \textit{compression factor}. Consider now a job $j$ and a number of machines $k\in \mathbb{N}$ with $\frac{1}{\rho} \le k \le m$, then we have that $t(j, \lfloor (1-\rho)k \rfloor) \le (1+4\rho) t(j,k)$.
\end{lemma}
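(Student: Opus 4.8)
The plan is to reduce the statement to the monotonicity of the work function and then to a short arithmetic estimate. Write $k' := \lfloor (1-\rho)k \rfloor$. First I would check that $k'$ is a legitimate machine count, i.e.\ $1 \le k' \le k \le m$. The upper bound $k' \le (1-\rho)k < k \le m$ is immediate. For the lower bound I would use the hypotheses $k \ge 1/\rho$ and $\rho \le 1/4$ to get $(1-\rho)k \ge (1-\rho)/\rho = 1/\rho - 1 \ge 3$, so $k' \ge 3 \ge 1$ and in particular $t(j,k')$ is defined.

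Next, since $k' \le k$, monotonicity of the work gives $w(j,k') \le w(j,k)$, that is $k'\cdot t(j,k') \le k\cdot t(j,k)$, and therefore $t(j,k') \le \tfrac{k}{k'}\,t(j,k)$. Hence the lemma reduces to the purely numerical claim $\tfrac{k}{k'} \le 1+4\rho$.

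To prove that, I would bound the floor from below by $k' = \lfloor (1-\rho)k \rfloor > (1-\rho)k - 1$, which is positive by the estimate above, so it suffices to show $\tfrac{k}{(1-\rho)k - 1} \le 1+4\rho$. Since the denominator is positive this is equivalent to $k \le (1+4\rho)\big((1-\rho)k - 1\big) = (1 + 3\rho - 4\rho^2)k - (1+4\rho)$, i.e.\ to $1 + 4\rho \le \rho(3-4\rho)\,k$. Using $k \ge 1/\rho$, the right-hand side is at least $3 - 4\rho$, so the inequality follows from $1 + 4\rho \le 3 - 4\rho$, which is exactly $\rho \le 1/4$ and holds by assumption.

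I do not expect a genuine obstacle here; the only points needing care are ensuring $(1-\rho)k - 1 > 0$ (so that the division and the direction of the inequality are valid) and that $k'$ is a meaningful allotment, and both reduce again to $k \ge 1/\rho$ and $\rho \le 1/4$. It is worth noting that the constant $1+4\rho$ is essentially what this argument forces: the ``$-1$'' introduced by the floor costs an extra $\Theta(\rho)$ on top of $1/(1-\rho) \approx 1+\rho$, and the restriction $\rho \le 1/4$ is precisely what lets this slack be absorbed into $1+4\rho$.
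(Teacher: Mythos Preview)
Your argument is correct. The paper does not give its own proof of this lemma; it merely quotes the statement from \cite{jansenLandLinTime} and uses it as a black box. Your proof---reduce to $t(j,k') \le \tfrac{k}{k'}t(j,k)$ via monotonicity of the work function, then bound $\tfrac{k}{k'}$ using $k' > (1-\rho)k-1$ and the hypotheses $k\ge 1/\rho$, $\rho\le 1/4$---is exactly the intended elementary computation behind the cited result, and all the side conditions (positivity of the denominator, $k'\ge 1$) are handled cleanly.
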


The intuitive interpretation of this lemma is that if a job uses  $k \ge \frac{1}{\rho}$ machines then we can free up to $\lceil \rho k \rceil$ machines and the processing time increases by a factor of $4\rho$. We are going to use this lemma in the following by introducing a set of predetermined machine counts.\looseness=-1

\begin{definition}\label{Def:Sizes}
Let $\rho$ be a compression factor and set $b := \frac 1 \rho$. We define 
$S_\rho := [b] \cup \{ \lfloor (1+\rho)^i b \rfloor \,|\, i\in [\, \lceil \log_{1+\rho}(\frac m b) \rceil \,] \}$ as the set of $\rho$-compressed sizes.
\end{definition}
 Note that reducing machine numbers to the next smaller size in $S_\rho$ corresponds to a compression and processing time may only increase by a factor of at most $1+4\rho$.\looseness=-1

\begin{corollary}\label{Cor:Sizes}
Let $\epsilon \in (0,1)$ be an accuracy parameter then $\rho = \frac \epsilon 4$ is a compression factor and $|S_\rho | \in O(\frac 1 \epsilon + \frac {\log(\epsilon m)} \epsilon)$.
\end{corollary}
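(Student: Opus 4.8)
The plan is to verify the two claims separately. First, that $\rho = \epsilon/4$ is a compression factor: by definition a compression factor must lie in $(0,1/4]$, and since $\epsilon \in (0,1)$ we immediately get $\rho = \epsilon/4 \in (0,1/4]$, so Lemma \ref{Lem:Comp} and Definition \ref{Def:Sizes} apply with this choice. Second, and this is the substantive part, I would bound the cardinality $|S_\rho|$. Writing $b = 1/\rho = 4/\epsilon$, the set $S_\rho$ is the union of $[b]$, which contributes $\lfloor b \rfloor \le 4/\epsilon$ elements, and the geometric-progression part $\{\lfloor (1+\rho)^i b\rfloor : i \in [\lceil \log_{1+\rho}(m/b)\rceil]\}$, which contributes at most $\lceil \log_{1+\rho}(m/b)\rceil$ elements. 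So $|S_\rho| \le b + \lceil \log_{1+\rho}(m/b)\rceil + 1$.

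The main work is to show $\log_{1+\rho}(m/b) \in O\!\left(\frac{\log(\epsilon m)}{\epsilon}\right)$. I would use the change of base $\log_{1+\rho}(m/b) = \frac{\ln(m/b)}{\ln(1+\rho)}$ together with the standard estimate $\ln(1+\rho) \ge \rho/2$ valid for $\rho \in (0,1]$ (or more crudely $\ln(1+\rho) \ge \rho - \rho^2/2 \ge \rho/2$ on this range). This gives $\log_{1+\rho}(m/b) \le \frac{2}{\rho}\ln(m/b) = \frac{8}{\epsilon}\ln(\epsilon m/4)$. Since $\ln(\epsilon m /4) \le \ln(\epsilon m)$ and we may assume $\epsilon m$ is bounded below by a constant in the regime of interest (otherwise the geometric part is empty because $m \le b$ forces the index set $[\lceil\log_{1+\rho}(m/b)\rceil]$ to be empty or the ceiling to be $0$), we obtain $\log_{1+\rho}(m/b) \in O\!\left(\frac{\log(\epsilon m)}{\epsilon}\right)$. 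Combining, $|S_\rho| \le \frac{4}{\epsilon} + O\!\left(\frac{\log(\epsilon m)}{\epsilon}\right) + 1 \in O\!\left(\frac{1}{\epsilon} + \frac{\log(\epsilon m)}{\epsilon}\right)$.

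The only mild subtlety I anticipate is the edge case where $m \le b = 4/\epsilon$: then $m/b \le 1$, so $\log_{1+\rho}(m/b) \le 0$, the index set in Definition \ref{Def:Sizes} is empty (after taking the ceiling to $0$ or interpreting $[0]$ as empty), and $S_\rho = [b]$, for which $|S_\rho| \le 4/\epsilon$ trivially fits the bound; also in this case $\log(\epsilon m)$ may be nonpositive, but the $1/\epsilon$ term alone already dominates, so the stated $O(\cdot)$ still holds. I would therefore split into the cases $m \le b$ and $m > b$, handle the former in one line, and run the logarithmic estimate above in the latter. No deeper obstacle is expected; this is essentially a routine counting argument once the $\ln(1+\rho) \ge \rho/2$ inequality is invoked.
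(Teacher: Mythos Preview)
Your argument is correct and is exactly the routine counting one would expect: check $\rho=\epsilon/4\in(0,1/4]$, bound the two parts of $S_\rho$ separately, and convert the base via $\ln(1+\rho)\ge\rho/2$. The paper does not supply a proof of this corollary at all; it is stated as an immediate consequence of Definition~\ref{Def:Sizes} and left to the reader, so there is nothing to compare against beyond noting that your write-up fills in precisely the standard details the authors omitted.
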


Generally our algorithms will work on the set $S_\rho$ for $\rho = \frac 4 \epsilon$ and only assign machine counts in $S_\rho$. If $m \le \frac 4 \epsilon$ we  work with any machine number as $S_\rho = [m]$. The algorithms we present will work in a dual approximation framework.\looseness=-1


A dual approximation framework is a classical approach for scheduling problems. The general idea is to use an approximation algorithm with constant ratio $c$ on a given instance and gain a solution with makespan $T$. While this is only an approximation we can conclude that the makespan $T^*$ of an optimal solution must be in the interval $[\frac{T}{c},T]$ and we can search this space via binary search. We can then see a candidate $d\in [\frac{T}{c},T]$ as a guess for the optimal makespan.\looseness=-1 

The approximation algorithm is then complemented with an estimation algorithm, that receives an instance $I$ and a guess for the makespan $d$ as input. This estimation algorithm then must be able to find a schedule with a makespan of at most $(1+\epsilon)d$ if such a schedule exists. If $d$ was chosen too small, i.e. $(1+\epsilon)d < OPT(I)$, our algorithm can reject the value $d$ and return false. 

We continue to apply this algorithm for candidates, until we find $d$ such that the algorithm is successful for $d$ but not for $\frac{d}{1+\epsilon}$. Note that if the algorithm fails for $\frac{d}{1+\epsilon}$ we have that $d = (1+\epsilon) \frac{d}{1+\epsilon}< OPT(I)$. Therefore the solution generated for $d$ has a makespan of $(1+ \epsilon)d < (1+ \epsilon)OPT(I)$. Using binary search we can find such a candidate $d$ in $O(\log \frac{1}{\epsilon})$ iterations \cite{jansenLandLinTime}.\looseness=-1

\subsection{Constant factor approximation}

Our constant factor approximation is gonna work in two steps:
First we compute an allotment and assign each job to a number of machines. Secondly we will use list scheduling in order to schedule our now fixed parallel jobs.For the first step we use an algorithm introduced by Ludwig and Tiwari \cite{ludwigTiwari}.\looseness=-1

%

\begin{lemma}[\cite{ludwigTiwari}]\label{Lem:Allot}
Let there be an instance $I$ for moldable job scheduling with $n$ jobs and $m$ machines. For an allotment $\alpha:J\rightarrow[m]$  we denote with 
\begin{equation*}
 \omega_{\alpha} := \min(\frac{1}{m} \sum\limits_{j\in J}w(j,\alpha(j)), \max_{j\in J}t(j,\alpha(j)))
\end{equation*}
 the trivial lower bound for any schedule that follows the allotment $\alpha$.
Furthermore for $S\subseteq [m]$ we denote with $\omega_{S} := \min\limits_{\alpha:J\rightarrow S} \omega_{\alpha}$ the trivial lower bound possible for any allotment, which allots any job to a number of machines in $S$.

 For any $S\subseteq [m]$  we can compute an allotment $\alpha:J\rightarrow S$ with $\omega_{\alpha} = \omega_{S}$ in time $O(n \log^2 |S|)$.
\end{lemma}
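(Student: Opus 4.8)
The plan is to reduce the computation of $\omega_S$ to a one-dimensional search over a threshold on the largest processing time. Recall that $\omega_\alpha$ is the \emph{larger} of the two trivial lower bounds $\tfrac1m\sum_{j\in J}w(j,\alpha(j))$ and $\max_{j\in J}t(j,\alpha(j))$, so that $\omega_S=\min_{\alpha:J\to S}\max\bigl(\tfrac1m\sum_{j}w(j,\alpha(j)),\,\max_{j}t(j,\alpha(j))\bigr)$. Fix a candidate value $q$ and consider the $\omega$-smallest allotment among all $\alpha:J\to S$ with $\max_j t(j,\alpha(j))\le q$. Because the work of every job is non-decreasing in the number of machines, its work contribution is minimized by assigning it the \emph{fewest} machines in $S$ that still meet the deadline, i.e.\ $\phi_q(j):=\min\{s\in S:t(j,s)\le q\}$ (and $q$ is infeasible if some job has no such~$s$). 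Writing $W(q):=\sum_{j}w(j,\phi_q(j))$ and $\mu(q):=\max_j t(j,\phi_q(j))\le q$, we obtain $\omega_S=\min_q\max\bigl(W(q)/m,\mu(q)\bigr)$, where it suffices to let $q$ range over the values $t(j,s)$ with $j\in J,\ s\in S$: for an optimal $\alpha^\ast$ one may take $q:=\max_j t(j,\alpha^\ast(j))$, which is one of these values, and then $\phi_q$ is no worse than $\alpha^\ast$.

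The second ingredient turns this into a binary search. Increasing $q$ can only decrease each $\phi_q(j)$, so $W$ is non-increasing and $\mu$ is non-decreasing; hence $q\mapsto\max(W(q)/m,\mu(q))$ first weakly decreases and then weakly increases, and its minimum is attained at the crossover point $q_0:=\min\{q:W(q)/m\le\mu(q)\}$ (and possibly at the candidate immediately below it, so we test both). It therefore suffices to locate $q_0$ by binary search, using that the predicate $P(q):=[\,W(q)/m\le\mu(q)\,]$ is monotone in $q$. We may assume that $s\mapsto t(j,s)$ is non-increasing on $S$ — discarding Pareto-dominated machine counts cannot change $\omega_S$, again because the work is monotone — so a single evaluation of $P(q)$ costs $O(n\log|S|)$: for each job find $\phi_q(j)$ by binary search over its sorted profile, sum the resulting works, and compare with $m\cdot\mu(q)$. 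Having found $q_0$ we output $\alpha:=\phi_{q_0}$ (or $\phi$ at the predecessor value, whichever has the smaller $\omega$), which by the reduction satisfies $\omega_\alpha=\omega_S$.

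The main obstacle is hitting the stated running time rather than merely something polynomial. A plain binary search runs over $\Theta(n|S|)$ candidate values and, with an $O(n\log|S|)$ test per step, only yields $O(n\log|S|\log(n|S|))$ — one logarithm too many. The remedy, which is the technical heart of the argument, is to search more economically: keep for every job a shrinking window of still-relevant candidates inside its sorted profile and, in each round, split at a weighted median of those windows, so that the total number of surviving candidates drops by a constant factor while later rounds operate on much smaller windows; a careful accounting then makes the per-round costs sum to $O(n\log^2|S|)$. Getting this selection-based search and its analysis exactly right, together with the Pareto reduction that legitimizes the inner binary searches, is the delicate part; it is precisely what is carried out by Ludwig and Tiwari, and we refer to \cite{ludwigTiwari} for the details.
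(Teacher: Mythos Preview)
The paper does not actually prove this lemma: it is stated with the citation \cite{ludwigTiwari} and used as a black box, so there is no ``paper's own proof'' to compare against. Your write-up is a faithful high-level outline of the Ludwig--Tiwari argument --- reduce to a one-parameter search over a processing-time threshold $q$, exploit the monotonicity of $W(q)$ and $\mu(q)$ to locate the crossover, and accelerate the search via weighted-median selection over the jobs' shrinking candidate windows to hit $O(n\log^2|S|)$ --- and you appropriately defer the technical accounting to the reference, which is exactly what the paper does.

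One remark worth making explicit: the statement as printed defines $\omega_\alpha$ with a $\min$, which cannot be the intended ``trivial lower bound'' (the minimum of two lower bounds need not be minimized by any meaningful allotment, and the subsequent use in Corollary~\ref{Cor:Guess} relies on $\omega_{[m]}\le OPT$, which requires the $\max$). You silently corrected this to $\max$ in your first sentence, which is right; it would be cleaner to flag the typo rather than just override it.
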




We apply this lemma but limit machine numbers to $\rho$-compressed sizes $S_\rho $ for $\rho = \frac \epsilon 4$. With that we gain an approximate value of $\omega_{[m]}$
\looseness=-1


\begin{lemma}\label{Lem:Allot2}
Given an instance $I$ for moldable job scheduling with $n$ jobs, $m$ machines and accuracy $\epsilon <1$. In time $O(n \log^2 (\frac{4}{\epsilon} +\frac{\log(\epsilon m)}{\epsilon}))$ 
we can compute an allotment $\alpha:J\rightarrow [m]$ such that $\omega_{\alpha} \le (1+\epsilon) \omega_{[m]}$.


\end{lemma}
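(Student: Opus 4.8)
The plan is to apply Lemma~\ref{Lem:Allot} with the specific choice $S = S_\rho$ for $\rho = \tfrac{\epsilon}{4}$ and then argue that restricting to $\rho$-compressed sizes costs only a factor $(1+\epsilon)$ in the lower bound $\omega$. First I would invoke Lemma~\ref{Lem:Allot}: it produces in time $O(n\log^2|S_\rho|)$ an allotment $\alpha\colon J\to S_\rho$ with $\omega_\alpha = \omega_{S_\rho}$. By Corollary~\ref{Cor:Sizes} we have $|S_\rho|\in O(\tfrac1\epsilon + \tfrac{\log(\epsilon m)}{\epsilon})$, so $\log^2|S_\rho| \in O(\log^2(\tfrac4\epsilon + \tfrac{\log(\epsilon m)}{\epsilon}))$, which gives the claimed running time. (If $m \le \tfrac4\epsilon$ then $S_\rho = [m]$ and the statement is trivial with $\omega_\alpha = \omega_{[m]}$, so assume $m > \tfrac4\epsilon = b$.)

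The substantive part is the bound $\omega_{S_\rho} \le (1+\epsilon)\,\omega_{[m]}$. To see this, let $\alpha^*\colon J\to[m]$ be an allotment attaining $\omega_{\alpha^*} = \omega_{[m]}$. I would define a new allotment $\alpha'\colon J\to S_\rho$ by rounding each $\alpha^*(j)$ down to the largest element of $S_\rho$ that is at most $\alpha^*(j)$; call this value $\alpha'(j)$. Since $S_\rho$ contains all of $[b]$, such an element always exists, and for $\alpha^*(j) \le b$ we simply have $\alpha'(j) = \alpha^*(j)$. For $\alpha^*(j) > b$, the element $\alpha'(j)$ is the previous $\rho$-compressed size, so by the remark following Definition~\ref{Def:Sizes} (which is exactly Lemma~\ref{Lem:Comp} applied along the geometric scale, using $k = \alpha^*(j) \ge b = 1/\rho$), we get $t(j,\alpha'(j)) \le (1+4\rho)\,t(j,\alpha^*(j)) = (1+\epsilon)\,t(j,\alpha^*(j))$. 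In particular $\alpha'(j)$ is well-defined as a compression step and the processing time blows up by at most $1+\epsilon$.

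It remains to check that both quantities inside the $\min$ defining $\omega_{\alpha'}$ are at most $(1+\epsilon)$ times the corresponding quantities for $\alpha^*$. For the makespan-type term this is immediate: $\max_j t(j,\alpha'(j)) \le (1+\epsilon)\max_j t(j,\alpha^*(j))$ by the pointwise bound just established. For the area term, monotonicity of the work function gives $w(j,\alpha'(j)) \le w(j,\alpha^*(j))$ since $\alpha'(j) \le \alpha^*(j)$, hence $\tfrac1m\sum_j w(j,\alpha'(j)) \le \tfrac1m\sum_j w(j,\alpha^*(j))$ — actually this term does not even increase. Taking the minimum of two upper bounds, $\omega_{\alpha'} \le (1+\epsilon)\,\omega_{\alpha^*} = (1+\epsilon)\,\omega_{[m]}$. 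Since $\omega_{S_\rho} \le \omega_{\alpha'}$ by definition, and the algorithm of Lemma~\ref{Lem:Allot} returns $\alpha$ with $\omega_\alpha = \omega_{S_\rho}$, we conclude $\omega_\alpha \le (1+\epsilon)\,\omega_{[m]}$, as required.

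The only place demanding care is making sure the rounding-down map is a legitimate single compression step: one must confirm that consecutive elements of $S_\rho$ differ by a factor at most $1+\rho$ (so that Lemma~\ref{Lem:Comp} applies with the stated $\rho$) and that the floor operations in Definition~\ref{Def:Sizes} do not break this — this is the routine estimate already encapsulated in the remark after Definition~\ref{Def:Sizes}, so I would simply cite it rather than re-derive it.
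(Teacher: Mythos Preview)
Your proposal is correct and follows essentially the same route as the paper: apply Lemma~\ref{Lem:Allot} with $S=S_\rho$ for $\rho=\epsilon/4$, then take an optimal allotment $\alpha^*$ for $[m]$, round each $\alpha^*(j)$ down to the nearest element of $S_\rho$, and use Lemma~\ref{Lem:Comp} together with monotonicity of work to bound both terms in $\omega$. The only cosmetic difference is that the paper carries out the floor estimate for consecutive elements of $S_\rho$ explicitly inside this proof (it is in fact where the remark after Definition~\ref{Def:Sizes} is justified), whereas you cite that remark; either way the argument is the same.
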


\begin{proof}

Let $\rho = \frac \epsilon 4, b = \frac 1 \rho$ and $S_\rho$ be the set of $\rho$-compressed sizes by definition \ref{Def:Sizes}. We now use lemma \ref{Lem:Allot} to compute an allotment $\alpha':[n]\rightarrow S_\rho$ such that $\omega_{\alpha'} = \omega_{S_\rho}$ and note that the proposed running time follows from corollary \ref{Cor:Sizes} and lemma \ref{Lem:Allot}. It remains to show that $\omega_{\alpha'} \le (1+\epsilon) \omega_{[m]} $.


For this let $\alpha$ be an allotment with $\omega_{\alpha}= \omega_{[m]}$. We now modify this allotment by rounding its assigned number of machines down to the next value in $S_\rho$. To be more precise let  $\alpha'':[n]\rightarrow S_\rho;j \mapsto \max\{s\in S_\rho |s \le \alpha(j)\}$. Note that based on the definitions and lemma \ref{Lem:Allot} it follows immediately that 
$\omega_{\alpha} \le \omega_{\alpha'} \le \omega_{\alpha''}$. We will conclude the proof by showing that $\omega_{\alpha''} \le (1+\epsilon)\omega_{\alpha}$.

We note that the rounding from $\alpha$ to $\alpha''$ is a compression. To see that consider two consecutive item sizes $\lfloor b(1+\rho)^{(i-1)}\rfloor,\lfloor b(1+\rho)^{(i)}\rfloor$ for some $i$ and note that: 

\begin{align*}
\lfloor b(1+\rho)^{(i)} \rfloor - \lfloor b(1+\rho)^{(i-1)} \rfloor  &\le
    b(1+\rho)^{(i)}  - ( b(1+\rho)^{(i-1)}  -1)\\
&=   b(1+\rho)^{(i)}  -  b(1+\rho)^{(i-1)}  +1\\
&= \rho b(1+\rho)^{(i-1)}  +1 \le \rho b(1+\rho)^{(i)} 
\end{align*}

Since we only round a job down when $\alpha(j) < \lfloor b(1+\rho)^{(i)}\rfloor$ we get that $\alpha(j)-\alpha''(j) \le \rho \alpha(j)$. According to lemma \ref{Lem:Comp} the  processing time of the job may only increase by a factor of at most $1+4\rho = 1+ \epsilon$. Therefore we have 
\begin{equation*}
\max_{j\in J}t(j,\alpha''(j)) \le \max_{j\in J}\{(1+\epsilon)t(j,\alpha(j))\} = (1+\epsilon) \max_{j\in J}t(j,\alpha(j)).
\end{equation*}
 Since the work function is monotone $\omega_{\alpha''} \le (1+\epsilon)\omega_{\alpha}$ follows directly.


\end{proof}

With this allotment we use list scheduling to achieve a constant factor approximation \cite{gareyGraham}. We use this in our dual-approximation framework. In thenext sections we will assume that we are given a makespan guess $d$ and give the required estimation algorithms for the desired results.\looseness=-1

\begin{corollary}\label{Cor:Guess}
The proposed algorithm is an approximation algorithm with a multiplicative ratio of $4$ and requires time $O(n \log^2 (\frac{4}{\epsilon} +\frac{\log(\epsilon m)}{\epsilon}))$.
\end{corollary}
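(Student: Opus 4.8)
The plan is to combine the allotment computed in Lemma~\ref{Lem:Allot2} with standard list scheduling for parallel (rigid) jobs, and to argue the approximation ratio via the classical analysis of Garey and Graham. First I would fix the allotment $\alpha:J\rightarrow[m]$ obtained from Lemma~\ref{Lem:Allot2}, so that every job now has a fixed machine requirement $\alpha_j$ and processing time $t(j,\alpha_j)$; after this step the jobs are rigid and we simply need to pack them. I would then run greedy list scheduling: process the (arbitrarily ordered) list of jobs, and start each job at the earliest time at which $\alpha_j$ machines are free. The running time of this second phase is $O(n\log n)$ or better with an appropriate event-queue data structure, and in any case is dominated by the $O(n\log^2(\tfrac 4\epsilon+\tfrac{\log(\epsilon m)}{\epsilon}))$ cost of computing the allotment, so the claimed time bound follows immediately from Lemma~\ref{Lem:Allot2}.

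For the ratio, I would invoke the Garey--Graham bound: list scheduling on rigid parallel jobs produces a makespan of at most $\tfrac{\sum_{j}w(j,\alpha_j)}{m} + \max_{j} t(j,\alpha_j)$, since at every moment before the last job's start time more than $m/2$\,--\,worth of machines, or more carefully, at least one machine, is busy; the precise statement is that total idle area is bounded by $(m-1)\max_j t(j,\alpha_j)$. Hence
\begin{equation*}
\mathrm{makespan} \le \frac{1}{m}\sum_{j\in J} w(j,\alpha_j) + \max_{j\in J} t(j,\alpha_j) \le 2\,\omega_\alpha,
\end{equation*}
where the last inequality uses that both $\tfrac 1m\sum_j w(j,\alpha_j)$ and $\max_j t(j,\alpha_j)$ are at most $\max$ of the two, which is at most $\omega_\alpha$\,--\,wait, $\omega_\alpha$ is the \emph{min} of those two quantities, so each term is at most the \emph{max}, and the sum is at most twice the max. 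The remaining point is to relate $\omega_\alpha$ to $OPT(I)$: since $\omega_{[m]}$ is a lower bound on the makespan of any schedule (it is the min over the best-possible allotment of a quantity that itself lower-bounds any schedule following that allotment), we have $\omega_{[m]} \le OPT(I)$, and Lemma~\ref{Lem:Allot2} gives $\omega_\alpha \le (1+\epsilon)\omega_{[m]}$. But the max of the two terms can be as large as twice $\omega_\alpha$ only loosely; a cleaner route is to bound each term separately by $(1+\epsilon)OPT(I)$ and conclude $\mathrm{makespan}\le 2(1+\epsilon)OPT(I)\le 4\,OPT(I)$ for $\epsilon<1$.

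The main obstacle I anticipate is the bookkeeping around $\omega_\alpha$ versus the two quantities it minimizes. One must be careful that $\omega_{[m]}\le OPT(I)$ really holds: the average-work term $\tfrac 1m\sum_j w(j,\alpha^*(j))$ for an optimal allotment $\alpha^*$ is a valid lower bound because no schedule can process total work $W$ on $m$ machines faster than $W/m$, and the max-processing-time term is a lower bound because some job must run to completion; taking the min of the two and then the min over allotments can only decrease this, so $\omega_{[m]}\le OPT(I)$. Combining, list scheduling with the Lemma~\ref{Lem:Allot2} allotment yields makespan at most $\tfrac 1m\sum_j w(j,\alpha_j)+\max_j t(j,\alpha_j) \le 2\max(\tfrac 1m\sum_j w(j,\alpha_j),\max_j t(j,\alpha_j))$, and since the allotment $\alpha$ is near-optimal for \emph{both} coordinates simultaneously (each being at most $\omega_\alpha\cdot$something)\,--\,actually the safe claim is just $\le 2(1+\epsilon)\,OPT(I) \le 4\,OPT(I)$ using $\epsilon<1$, which is exactly the stated ratio of $4$.
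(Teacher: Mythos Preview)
Your approach is exactly the paper's: compute the allotment of Lemma~\ref{Lem:Allot2}, run list scheduling on the resulting rigid instance, and invoke the Garey--Graham bound to get makespan $\le 2\omega_\alpha\le 2(1+\epsilon)\omega_{[m]}\le 4\,OPT(I)$, with the runtime inherited from Lemma~\ref{Lem:Allot2}.

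The hesitation you record about $\min$ versus $\max$ is justified and stems from a typo in the paper's statement of Lemma~\ref{Lem:Allot}: $\omega_\alpha$ should be the \emph{maximum} of $\tfrac 1m\sum_j w(j,\alpha_j)$ and $\max_j t(j,\alpha_j)$, not the minimum (this is the standard Ludwig--Tiwari objective). With $\max$ the chain is immediate: the list-scheduling bound gives makespan at most the sum of the two quantities, hence at most $2\omega_\alpha$; and $\omega_{[m]}\le OPT(I)$ because for an optimal allotment $\alpha^*$ both $\tfrac 1m\sum_j w(j,\alpha^*_j)$ and $\max_j t(j,\alpha^*_j)$ are at most $OPT(I)$, so their maximum is too. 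Under the literal $\min$ reading, your attempted patch (``bound each term separately by $(1+\epsilon)OPT$'') is not obtainable from Lemma~\ref{Lem:Allot2}, which would only control the smaller of the two terms; so the gap you sensed is real for that reading but vanishes once the definition is corrected.
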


\begin{proof}


The running time results mainly from applying lemma \ref{Lem:Allot2} to gain an allotment $\alpha$ with $\omega_{\alpha}\le (1+\epsilon) \omega_{[m]}$. Applying list scheduling to our computed allotment yields a schedule with makespan $2\omega_{\alpha}\le 2(1+\epsilon) \omega_{[m]} \le 4 OPT(I)$.
\end{proof}


\section{FPTAS for large machine counts}

In the following we assume that for every instance we have $m > 8\frac{n}{\epsilon}$. Jansen and Land showed that an FPTAS can be achieved by simply scheduling all jobs $j$ with $\gamma(j,(1+\epsilon)d)$ machines at time $0$. They consider all possible number of machines for each job. We argue that it is sufficient to consider assigning a number in $S_{\frac{\epsilon}{4}}$ to achieve a similar result. We will however require another compression to make sure our solution is feasible. \looseness=-1


\begin{lemma}\label{Lem:FPTASBS}
Given an instance $I$ with $n$ jobs, $m > 8\frac{n}{\epsilon}$ machines and a target makespan $d$, we can in time $O(n\log(\frac{4}{\epsilon} + \frac{\log(\epsilon m)}{\epsilon}))$ find a schedule with makespan $(1+3\epsilon)d$ if $d\ge OPT(I)$ or confirm that $d < OPT(I)$.\looseness=-1
\end{lemma}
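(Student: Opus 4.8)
The plan is to adapt the Jansen--Land FPTAS argument, but working only with the $\rho$-compressed sizes $S_\rho$ for $\rho=\frac{\epsilon}{4}$ instead of all $m$ machine counts, and to insert one extra compression step to guarantee feasibility. First I would define, for each job $j$, the machine count
\begin{equation*}
k_j := \min\{\, s \in S_\rho \,\mid\, t(j,s) \le (1+\epsilon)d \,\},
\end{equation*}
i.e. the smallest compressed size that brings the processing time of $j$ down to $(1+\epsilon)d$; if no such $s$ exists for some job, then in particular $t(j,m) > (1+\epsilon)d \ge d$ (since $m\in S_\rho$), so no schedule of makespan $d$ can exist and we may safely reject $d$. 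Assuming every $k_j$ is defined, the naive schedule places every job at time $0$ with $k_j$ machines. By construction this has makespan at most $(1+\epsilon)d$; the only thing that can go wrong is that $\sum_{j} k_j > m$, i.e. the schedule is infeasible.

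The main work is bounding the total machine demand and fixing infeasibility by a second compression. I would argue that $t(j, k_j - 1') > (1+\epsilon)d$ where $k_j - 1'$ denotes the predecessor of $k_j$ in $S_\rho$ (this is exactly the minimality of $k_j$), and hence — if $d \ge OPT(I)$ — the optimal allotment must give job $j$ at least roughly $k_j/(1+\rho)$ machines (the gap between consecutive sizes in $S_\rho$ is a factor $1+\rho$, and compressing to the next size below only costs a factor $1+4\rho=1+\epsilon$ in time by Lemma \ref{Lem:Comp}). Summing over all jobs and using that any feasible schedule of makespan $d$ has total work (area) at most $md$, together with monotonicity, I would derive a bound of the form $\sum_j k_j \le (1+O(\epsilon)) m + O(n)$. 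Here the hypothesis $m > 8\frac{n}{\epsilon}$ is what makes the additive $O(n)$ term absorbable into an $O(\epsilon)m$ term: $n \le \frac{\epsilon}{8} m$, so $\sum_j k_j \le (1+c\epsilon) m$ for an explicit small constant $c$. Now apply Lemma \ref{Lem:Comp} once more: uniformly compress every $k_j$ by the factor $\frac{1}{1+c'\epsilon}$ (equivalently replace $k_j$ by $\lfloor(1-\rho')k_j\rfloor$ for a suitable $\rho' = \Theta(\epsilon)$, noting $k_j \ge b = 4/\epsilon \ge 1/\rho'$ so the lemma applies, or handle the $O(1/\epsilon)$-machine jobs separately since they contribute at most $O(n/\epsilon) \le O(m)$ total anyway). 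This brings the total demand below $m$ while inflating each processing time by a further factor $1+4\rho' = 1+O(\epsilon)$, so the final makespan is $(1+\epsilon)(1+O(\epsilon))d \le (1+3\epsilon)d$ after choosing the constants appropriately.

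For the running time: computing each $k_j$ is a single binary search over the sorted set $S_\rho$, costing $O(\log|S_\rho|) = O(\log(\frac{4}{\epsilon} + \frac{\log(\epsilon m)}{\epsilon}))$ by Corollary \ref{Cor:Sizes}, and we do this once per job, plus $O(n)$ work for the summation and the final uniform compression — giving the claimed $O(n\log(\frac{4}{\epsilon} + \frac{\log(\epsilon m)}{\epsilon}))$ total. The part I expect to require the most care is the demand bound: I need to relate $k_j$ (defined via $S_\rho$ and the threshold $(1+\epsilon)d$) to the optimal number of machines $\alpha^*(j)$ (an arbitrary integer, defined via threshold $d$), pushing the factor-$(1+\epsilon)$ slack and the rounding-to-$S_\rho$ slack through Lemma \ref{Lem:Comp} in the right direction, and then carefully tracking the constants so that the three sources of error (the $(1+\epsilon)d$ threshold, the $S_\rho$ rounding, and the final feasibility compression) compose to at most $(1+3\epsilon)d$ rather than something larger. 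I would also double-check the edge case of jobs with $\alpha^*(j) < 1/\rho$, which Lemma \ref{Lem:Comp} does not cover directly, by noting that $S_\rho \supseteq [b]$ so for such jobs $k_j \le \alpha^*(j)$ already (no rounding loss) — or simply bounding their total contribution crudely using $n \le \frac{\epsilon}{8}m$.
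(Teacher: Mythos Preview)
Your approach is correct and coincides with the paper's in spirit, but differs in two technical choices that are worth noting.

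First, the paper does not introduce your $k_j$ with the $(1+\epsilon)d$ threshold. Instead it takes $\gamma'(j,d):=\max\{s\in S_\rho:s\le \gamma(j,d)\}$, i.e.\ it rounds the exact canonical machine count $\gamma(j,d)$ \emph{down} to $S_\rho$; this is itself a compression, so $t(j,\gamma'(j,d))\le(1+\epsilon)d$. Your $k_j$ is the minimal element of $S_\rho$ achieving time $\le(1+\epsilon)d$, hence $k_j\le\gamma'(j,d)$ always, and your definition is arguably the more natural one to binary-search for. Either works.

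Second, for the demand bound the paper does not redo the area argument: it simply invokes the Jansen--Land inequality $\sum_{j}\gamma(j,d)\le m+n$ as a black box, then splits jobs into narrow ($\gamma'(j,d)\le b$) and wide ($\gamma'(j,d)>b$), observes that narrow jobs occupy at most $nb=4n/\epsilon<m/2$ machines, and compresses \emph{only the wide jobs} by the factor $(1-\rho)$. The freed machines are then $\ge \rho\sum_{\text{wide}}\gamma(j,d)>\rho\cdot\tfrac{m}{2}=\tfrac{\epsilon m}{8}>n$, which exactly cancels the $+n$ excess. This is slicker than your uniform compression plus ``handle narrow jobs separately'' sketch: your version works too, but note that compressing \emph{all} jobs uniformly does not directly succeed (Lemma~\ref{Lem:Comp} needs $k_j\ge 1/\rho'$), so you really must leave narrow jobs uncompressed and argue, as the paper does, that the wide jobs alone free enough machines. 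Your parenthetical ``or handle the $O(1/\epsilon)$-machine jobs separately'' is exactly this, so you have the right fix in mind; just be aware that the clean accounting needs the wide/narrow split made explicit, with the excess bounded by $n$ rather than a generic $O(\epsilon)m$.

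The makespan analysis and the running-time analysis are identical in both versions.
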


\begin{proof}
Let $S_\rho$ be the set of $\rho$-compressed sizes for $\rho = \frac \epsilon 4$ and $b= \frac 1 \rho$.
Let $\gamma'(j,d) := \max\{s\in S_\rho | s\le  \gamma(j,d) \}$ and denote a job as \textit{narrow} when $\gamma'(j,d) \le b$ or \textit{wide} when $\gamma'(j,d) > b$. The schedule we propose results from scheduling narrow jobs with $\gamma'(j,d)$ machines and wide jobs with a compressed number of machines, that is $\lfloor (1-\rho)\gamma'(j,d)\rfloor$. We schedule all jobs at time $0$ next to each other. The running time results from 
finding $\gamma'(j,d)$ for all jobs via binary search. Note that if $\gamma'(j,d)$ is undefined for some job, then $d$ was chosen too small.\looseness=-1

Every job $j$ scheduled with $\gamma(j,d)$ machines has processing time of at most $d$. Rounding down the number of machines to $\gamma'(j,d)$ may increase the processing time by a factor of $1+4\rho$, as this process corresponds to a compression. We then apply another compression to wide jobs, which may increase the processing time again by the same factor. In total the new processing time of a job is bound by : $(1+4\rho)((1+4\rho)t(j,\gamma(j,d))) \le (1+\epsilon)^2 d\le (1+3\epsilon)d$.\looseness=-1

It remains to show that our schedule uses at most $m$ machines in total. Jansen and Land showed that $\sum_{j\in J}\gamma(j,d) \le m + n$. We assume that $\sum_{j\in J}\gamma(j,d) > m$, since otherwise our schedule would be feasible already. Denote with $J_W,J_N$ the set of wide and narrow jobs. We can see that that  $\sum_{j\in J_N}\gamma(j,d) \le  n \cdot b = 4 \frac n \epsilon < \frac 1 2 m$ and therefore $\sum_{j\in J_W}\gamma(j,d)> \frac 1 2 m $. We will show that our rounding and compression procedure will free up enough machines.

Consider a wide job $j$ and write $\gamma(j,d) = \gamma'(j,d) +r $ for some $r$. Since $j$ was assigned to $\lfloor (1-\rho)\gamma'(j,d)\rfloor$ machines, the number of freed up machines is at least:

\begin{align*}
\gamma(j,d) - \lfloor (1-\rho)\gamma'(j,d)\rfloor &\ge \gamma'(j,d) +r - (1-\rho)\gamma'(j,d) \\
&= \rho\gamma'(j,d) +r \\
&\ge \rho(\gamma'(j,d) +r) =
\rho(\gamma(j,d))
\end{align*}

In total we free at least  $\sum_{j\in J_W}(\rho\gamma(j,d)) > \rho \frac 1 2 m > \frac \epsilon 4 4 \frac n \epsilon = n $ machines. Our schedule therefore uses at most  $\sum_{j\in J}\gamma(j,d) - n \le m + n - n = m$ machines.
\end{proof}


Note that we can apply this lemma for $\epsilon' = \frac \epsilon 3$ or an even more simplified algorithm thats results by rounding down $\gamma(j,(1+\epsilon)d)$, which also allows a simple schedule with less than $m$ machines \cite{jansenLandLinTime}. If we use this algorithm in our dual approximation framework we achieve the desired FPTAS.\looseness=-1


\begin{proof}[of Theorem \ref{Theo:FPTAS}]
We conclude for the runtime that we have to apply our dual approximation framework, meaning we apply the constant factor approximation  and then for $\log(\frac 1 \epsilon)$ makespan guesses we apply lemma \ref{Lem:FPTASBS}. Combining these running times we get a time of 
$O(n \log^2 (\frac{1}{\epsilon} +\frac{\log(\epsilon m)}{\epsilon}))$.\looseness=-1
\end{proof}

\section{$(\frac{3}{2}+ \epsilon)$-Approximation}

We will now consider the goal of achieving a $\frac{3}{2}+ \epsilon$ multiplicative approximation ratio. Our algorithm will operate again in the context of the dual approximation framework. Therefore we assume a makespan guess $d$ and give an estimation algorithm. Our estimation algorithm will reduce the scheduling problem to a knapsack instance in a way that was initially introduced by Mounié et al.~\cite{mounieTrystram}. This approach was also used by Jansen and Land \cite{jansenLandLinTime} who gave a modified version of this knapsack instance. We however propose a new simpler rounding scheme that uses $\rho$-compressed sizes for $\rho = \frac 4 \epsilon$ and further modify item profit. In that way we do not need a complicated algorithm to solve the knapsack problem, but we can actually apply the result from Axiotis and Tzamos \cite{axiotis19} in an efficient manner. \looseness=-1



At the start we split the set of jobs in small and big jobs $\allJobs = \bigJobs{d} \cup \smallJobs{d}$ with $\smallJobs{d} := \{j\in \allJobs \,|\, t(j,1) \le  \frac d 2\}$ and $\bigJobs d = \allJobs \backslash \smallJobs d$. 
Since we can add small items greedily at the end in linear time \cite{jansenLandLinTime}, we only need to schedule large jobs. We give a short run-down on the most important results in regards to the knapsack instance introduced by Mounié et al. .
\looseness=-1

Their main idea was to distribute all jobs into two shelfs with width $m$. The first shelf $S_1$ has height $d$ and the second shelf $S_2$ has height $\frac d 2$. If a job $j$ was scheduled in either shelf with height $s\in \{d, \frac d 2\}$ then $j$ would be allotted to $\gamma(j,s)$ machines. In order to assign jobs to a shelf, they use the following knapsack instance:\looseness=-1

Consider for each job $j \in \bigJobs d$ an item with size $s_j(d) := \gamma(j,d)$ and profit $p_j(d) := w(j,\gamma(j,d/2))- w(j,\gamma(j,d))$ and set the knapsack size to $t := m$. Intuitively this knapsack instance chooses a set of jobs $J'$ to be scheduled in $S_1$. These jobs are chosen such that their work increase in the $S_2$ would be large. 

We will denote this problem as $KP(\bigJobs d,m,d)$ where the first two parameters declare the items and knapsack size and the third parameter is the target makespan, which then determines the size and profits of the items. Given a solution $J'\subseteq J$ we denote the total work of the resulting two-shelf schedule by $W(J',d)$ and note that: \looseness=-1

\begin{align*}
W(J',d) &= \sum_{j \in J'}{w(j,\gamma(j,d))} + \sum_{j\in \bigJobs{d} \backslash J'} {w(j, \gamma(j, \frac d 2 ))} \\
&= \sum_{j\in \bigJobs d}{w(j,\gamma(j,\frac d 2))} + \sum_{j\in J'} w(j, \gamma(j,  d )) - \sum_{j\in J'} w(j, \gamma(j, \frac d 2 )) \\
&= \sum_{j\in \bigJobs d}{w(j,\gamma(j,\frac d 2))} - \sum_{j\in J'} p_j(d)
\end{align*}

As the knapsack profit is maximized, the total work $W(J',d)$ is minimized.
The result from Mounié et al. which we use is summarized in these two lemmas. We refer to either \cite{jansenLandLinTime,mounieTrystram} for a detailed description of these results.\looseness=-1

\begin{lemma}[\cite{mounieTrystram}]\label{Lem:schedtoks}
If there is a schedule for makespan $d$, then there is a solution $J' \subseteq \bigJobs d$ to the knapsack instance with $W(J',d) \le md - W(\smallJobs d,d)$.\looseness=-1
\end{lemma}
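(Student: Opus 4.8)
The plan is to show that if a schedule with makespan $d$ exists, one can extract from it a feasible knapsack solution $J'$ whose associated two-shelf work $W(J',d)$ is not too large. The natural candidate for $J'$ is the set of big jobs that, in the given makespan-$d$ schedule, already receive ``enough'' machines to finish within time $d$ using only $\gamma(j,d)$ of them — more precisely, $J' := \{ j \in \bigJobs{d} \mid \alpha_j \ge \gamma(j,d) \}$, where $\alpha$ is the allotment of the assumed schedule. First I would argue that $J'$ is a feasible knapsack solution, i.e. $\sum_{j\in J'} s_j(d) = \sum_{j\in J'}\gamma(j,d) \le m$. This should follow by looking at a time slice of the optimal schedule: every big job has processing time $t(j,1) > d/2$ (by definition of $\bigJobs{d}$), so in a schedule of makespan $d$ no two big jobs assigned more than... — more carefully, one considers the time $d/2$ (or just below the makespan) and observes that all big jobs that finish within $d$ must overlap a common time point, hence their machine counts sum to at most $m$; since $\alpha_j \ge \gamma(j,d)$ for $j \in J'$, a fortiori $\sum_{j\in J'}\gamma(j,d)\le \sum_{j\in J'}\alpha_j \le m$.

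Next I would bound $W(J',d)$ using the identity derived just above the lemma, namely $W(J',d) = \sum_{j\in\bigJobs d} w(j,\gamma(j,d/2)) - \sum_{j\in J'} p_j(d)$. Equivalently, $W(J',d) = \sum_{j\in J'} w(j,\gamma(j,d)) + \sum_{j\in \bigJobs d\setminus J'} w(j,\gamma(j,d/2))$. The goal is to compare this with the total area used by the optimal makespan-$d$ schedule, which is at most $m\cdot d$, and then subtract the area occupied by the small jobs, $W(\smallJobs d, d)$. For each $j\in J'$ we have $\alpha_j \ge \gamma(j,d)$, so by monotonicity $w(j,\gamma(j,d)) \le w(j,\alpha_j)$, i.e. the two-shelf work charged to $j$ is no more than $j$'s work in the optimal schedule. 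For each $j\in\bigJobs d\setminus J'$ we have $\alpha_j < \gamma(j,d)$, which means $t(j,\alpha_j) > d$ is impossible in a makespan-$d$ schedule — wait, that is exactly the point: $\alpha_j < \gamma(j,d)$ would force $t(j,\alpha_j) > d$, contradicting makespan $d$. So in fact $J' = \bigJobs d$ and the set $\bigJobs d \setminus J'$ is empty, which trivializes part of the argument; I would restructure so the nonempty case is the interesting one, or note that $\gamma(j,d)$ being defined for all big jobs is guaranteed by $d \ge OPT(I)$.

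Given that $\alpha_j \ge \gamma(j,d)$ for every big job, the two-shelf schedule associated to $J' = \bigJobs d$ puts every big job in shelf $S_1$ with $\gamma(j,d)\le m$ machines, and we must still check feasibility of the shelf packing; but the cleaner route is: $W(J',d) = \sum_{j\in\bigJobs d} w(j,\gamma(j,d)) \le \sum_{j\in\bigJobs d} w(j,\alpha_j) \le m d - W(\smallJobs d, d)$, where the last inequality says that the total area of all jobs in the makespan-$d$ schedule is at most $md$, and the big jobs' share is what is left after removing the small jobs' share $W(\smallJobs d, d) := \sum_{j\in\smallJobs d} w(j,\gamma(j,d))$ (using $w(j,\gamma(j,d)) \le w(j,\alpha_j)$ for small jobs as well). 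The main obstacle I anticipate is handling the definition of $\gamma(j,d)$ correctly and being careful about whether $J'$ can be a proper subset of $\bigJobs d$: one needs the makespan-$d$ hypothesis precisely to guarantee $\alpha_j \ge \gamma(j,d)$ for all $j$, and to guarantee $\gamma(j,d)$ is defined (i.e. $d$ achievable with $m$ machines). Once that is pinned down, the rest is the area accounting via monotonicity and the total-area bound $md$.
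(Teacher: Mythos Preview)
The paper does not prove this lemma; it quotes it from Mouni\'e, Rapine and Trystram and refers the reader there. So there is no ``paper's own proof'' to compare against, but there is a standard argument, and your proposal deviates from it at a crucial point.

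Your plan collapses to $J' = \bigJobs d$, and you then try to argue feasibility of this choice for the knapsack. That is where the gap lies. The knapsack constraint is $\sum_{j\in J'}\gamma(j,d)\le m$, and it is \emph{not} true in general that $\sum_{j\in\bigJobs d}\gamma(j,d)\le m$. The overlap argument you sketch (``all big jobs \dots\ must overlap a common time point'') confuses $t(j,1)>d/2$ with $t(j,\alpha_j)>d/2$: a big job may well be allotted enough machines in the given schedule that its actual processing time drops below $d/2$, so nothing forces all big jobs to be simultaneously active. Concretely, take $m=2$ and three big jobs with $t(j,1)=0.6d$ and $t(j,2)=0.35d$; schedule two of them on one machine each in $[0,0.6d]$ and the third on both machines in $[0.6d,0.95d]$. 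This has makespan $<d$, yet $\sum_{j\in\bigJobs d}\gamma(j,d)=3>2=m$, so $J'=\bigJobs d$ is infeasible. Bounding $W(J',d)$ is then beside the point.

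The standard fix is to choose $J'$ not by the condition $\alpha_j\ge\gamma(j,d)$ (which, as you noticed, is vacuous) but by the condition $t(j,\alpha_j)>d/2$. Every such job is necessarily big, and in a schedule of makespan $d$ every such job is active at time $d/2$, so $\sum_{j\in J'}\alpha_j\le m$ and hence $\sum_{j\in J'}\gamma(j,d)\le m$. For $j\in\bigJobs d\setminus J'$ one has $t(j,\alpha_j)\le d/2$, whence $\alpha_j\ge\gamma(j,d/2)$ and, by monotonicity, $w(j,\gamma(j,d/2))\le w(j,\alpha_j)$. The area accounting you wrote down then goes through verbatim with this corrected $J'$.
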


\begin{lemma}[\cite{mounieTrystram}]\label{Lem:kstosched}
If there is a solution $J' \subseteq \bigJobs d$ to the knapsack instance with $W(J',d) \le md - W(\smallJobs d,d)$, then  we can find a schedule for all jobs $J$ with makespan $\frac{3}{2}d$ in time $O(n \log n)$.\looseness=-1
\end{lemma}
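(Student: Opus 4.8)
The plan is to turn the knapsack solution $J'$ into the concrete two‑shelf packing described just before Lemma~\ref{Lem:schedtoks}, verify that this packing is a feasible schedule, and then append the small jobs without pushing the makespan past $\tfrac32 d$. Concretely I would run every $j\in J'$ during $[0,d]$ on $\gamma(j,d)$ machines --- this is shelf $S_1$, whose height is at most $d$ because $t(j,\gamma(j,d))\le d$ --- and every remaining big job $j\in\bigJobs d\setminus J'$ during $[d,\tfrac32 d]$ on $\gamma(j,d/2)$ machines --- shelf $S_2$, of height at most $\tfrac d2$. Because jobs may be placed non-contiguously, the feasibility of a single shelf is exactly the statement that the total number of machines it demands is at most $m$. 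For $S_1$ this is immediate: in $KP(\bigJobs d,m,d)$ item $j$ has size $\gamma(j,d)$ and the capacity is $m$, so every feasible selection satisfies $\sum_{j\in J'}\gamma(j,d)\le m$.

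The heart of the proof is the feasibility of $S_2$, i.e.\ essentially the bound $\sum_{j\in\bigJobs d\setminus J'}\gamma(j,d/2)\le m$, and the only lever for it is the hypothesis $W(J',d)\le md-W(\smallJobs d,d)\le md$. The idea is to convert machine counts into work lower bounds via monotonicity: a big job $j\notin J'$ has $\gamma(j,d/2)\ge 2$, and since $\gamma(j,d/2)-1$ machines fail to reach processing time $\tfrac d2$, monotonicity gives $w(j,\gamma(j,d/2))\ge w(j,\gamma(j,d/2)-1)>(\gamma(j,d/2)-1)\tfrac d2$; in the same way $w(j,\gamma(j,d))$ is bounded from below in terms of $\gamma(j,d)$ for $j\in J'$. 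Summing these estimates over all big jobs and inserting $W(J',d)\le md$ produces a relation between $\sum_{j\in J'}\gamma(j,d)$, $\sum_{j\in\bigJobs d\setminus J'}\gamma(j,d/2)$ and $m$; if the $J'$ we are given is wasteful --- leaving room in $S_1$ while overloading $S_2$ --- one first moves big jobs from $S_2$ into the still-free part of $S_1$ (which only decreases $W(J',d)$), and then this relation, combined with the knapsack constraint, yields the required feasibility, using in addition that $S_2$-jobs are short enough ($\le\tfrac d2$) to be stacked two-high inside the idle columns of $S_1$ when needed. Getting the constants in this counting/repacking step to close is the delicate point; Mounié et al.~\cite{mounieTrystram} carry it out in detail, and I expect this to be the main obstacle.

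It remains to insert the small jobs. Each has $t(j,1)\le\tfrac d2$ and needs a single machine, so it can be slotted greedily into whatever space $S_1\cup S_2$ leaves idle; since the hypothesis also caps the total work of all jobs at $md$, a standard list-scheduling/area argument (as in \cite{jansenLandLinTime}) shows that every small job still fits within $[0,\tfrac32 d]$. Finally, the running time: computing the $\gamma$-values and laying out the two shelves is linear up to a single sort of the big jobs (e.g.\ by machine count) used for the repacking, and the greedy placement of small jobs is linear, so the whole construction runs in $O(n\log n)$.
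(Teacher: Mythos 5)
Your proposal correctly sets up the easy half of the statement: shelf $S_1$ is feasible because the knapsack capacity is $m$, the shelves can be stacked in time as $[0,d]$ and $[d,\tfrac32 d]$, and the small jobs can be appended greedily using the area bound (the paper itself handles them this way). But the actual content of the lemma --- the part that makes it a theorem of Mounié, Rapine and Trystram rather than an observation --- is exactly the step you leave open and explicitly defer to \cite{mounieTrystram}. Note that your own counting shows why it does not close: from $w(j,\gamma(j,\tfrac d2)) > (\gamma(j,\tfrac d2)-1)\tfrac d2$ and $W(J',d)\le md$ one only gets $\sum_{j\in \bigJobs{d}\setminus J'}\gamma(j,\tfrac d2)\le 2m+|\bigJobs{d}\setminus J'|$, so the second shelf may legitimately demand on the order of twice the available machines, and this can happen even for the knapsack-optimal $J'$. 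Consequently the target you set yourself --- proving $\sum_{j\in\bigJobs{d}\setminus J'}\gamma(j,\tfrac d2)\le m$ after moving some jobs into idle columns of $S_1$ and stacking short jobs two-high --- is not the right invariant: idle columns of $S_1$ need not be wide enough for wide shelf-$2$ jobs, and in general no pure two-shelf schedule with the allotments $\gamma(j,d)$, $\gamma(j,\tfrac d2)$ is feasible. The transformation of Mounié et al.\ instead \emph{changes the allotments} of certain overflowing shelf-$2$ jobs (monotonicity of work bounds the resulting increase in processing time when machines are withdrawn) and schedules them over longer time windows, with a careful case analysis and counting argument showing that at every moment at most $m$ machines are busy and the makespan stays below $\tfrac32 d$. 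That argument is the lemma; without it the proof is not complete.

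For context: the paper under review does not prove this lemma either --- it imports it from \cite{mounieTrystram} (see also the exposition in \cite{jansenLandLinTime}) --- so there is no internal proof to compare against. But as a blind proof attempt, yours has a genuine gap: the feasibility of the transformed schedule, which is the delicate point you name, is asserted rather than established, and the route you sketch for it (repacking into $S_1$ plus two-high stacking, keeping the original shelf allotments) would fail on instances where the second shelf is overloaded by a constant factor. The $O(n\log n)$ running-time discussion is plausible but moot until the construction it is supposed to time is actually specified.
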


Based on these lemmas we can easily reject a makespan guess $d$ if $W(J',d)$ is larger than $md-W(\smallJobs d,d)$. We note as well that lemma \ref{Lem:kstosched} can be applied if we find a solution for a higher makespan.\looseness=-1

\begin{corollary}[\cite{jansenLandLinTime}]\label{coro:final}
Let $d'\ge d$ and $J' \subseteq \bigJobs d$ be a feasible solution of the knapsack problem $KP(\bigJobs d,m, d')$ with $W(J',d) \le md'-W(\smallJobs d,d)$. Then we can find a schedule with makespan at most $\frac 3 2 d'$ in time $O(n\log n)$.\looseness=-1
\end{corollary}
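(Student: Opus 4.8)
The plan is to deduce this from Lemma~\ref{Lem:kstosched} applied with target makespan $d'$ instead of $d$. Passing from $d$ to $d'$ changes three things: the two shelf heights become $d'$ and $d'/2$, the required machine counts $\gamma(j,\cdot)$ only decrease, and the natural big/small partition shifts from $(\bigJobs d,\smallJobs d)$ to $(\bigJobs{d'},\smallJobs{d'})$. I would first dispose of the monotonicity issues and then reconcile the two partitions so that the hypotheses of Lemma~\ref{Lem:kstosched} at makespan $d'$ are literally met.

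First I would record that, since $d'\ge d$, we have $\gamma(j,d')\le\gamma(j,d)$ and $\gamma(j,d'/2)\le\gamma(j,d/2)$ for every job, so monotonicity of $w$ gives $W(J',d')\le W(J',d)$ and $W(\smallJobs d,d')\le W(\smallJobs d,d)$; also $\bigJobs{d'}\subseteq\bigJobs d$, and every job $j$ in the gap $\bigJobs d\setminus\bigJobs{d'}$ has $t(j,1)\le d'/2$, hence $\gamma(j,d')=\gamma(j,d'/2)=1$, so such a job always contributes exactly $w(j,1)$ to the total work, regardless of which shelf it is placed on or whether it is treated as small. I would then set $J'':=J'\cap\bigJobs{d'}$. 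Feasibility of $J'$ for $KP(\bigJobs d,m,d')$ immediately gives $\sum_{j\in J''}\gamma(j,d')\le m$, so $J''$ is feasible for $KP(\bigJobs{d'},m,d')$. Moreover, the gap remark shows that the total two-shelf work is the same under either partition, namely $W(J'',d')+W(\smallJobs{d'},d')=W(J',d')+W(\smallJobs d,d')$; combining this with the monotonicity bounds and the hypothesis $W(J',d)\le md'-W(\smallJobs d,d)$ yields $W(J'',d')\le md'-W(\smallJobs{d'},d')$. At this point $J''$ satisfies exactly the premise of Lemma~\ref{Lem:kstosched} for makespan $d'$ with partition $(\bigJobs{d'},\smallJobs{d'})$, so that lemma produces a schedule of all of $\allJobs$ with makespan $\tfrac32 d'$ in time $O(n\log n)$; forming $J''$ adds only $O(n)$.

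The only delicate point is this reconciliation of the two big/small partitions: a job that is big with respect to $d$ but small with respect to $d'$ must neither be lost nor over-charged when we transfer the work bound from the $d$-indexed quantities to the $d'$-indexed ones, and it must be placed consistently in the final schedule. The observation that such a job needs exactly one machine at either shelf height is what makes all three accountings ($d$-shelf, $d'$-shelf, small) agree, and once that is settled the remainder is the monotonicity of $\gamma$ and $w$ plus a direct appeal to Lemma~\ref{Lem:kstosched}.
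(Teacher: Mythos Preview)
Your argument is correct. The paper does not give its own proof of this corollary; it simply cites it from \cite{jansenLandLinTime} and moves on, so there is nothing to compare against beyond noting that your reduction to Lemma~\ref{Lem:kstosched} at target makespan $d'$ is precisely the intended derivation. One small remark: for jobs in $\smallJobs d$ the work contribution is always $w(j,1)$ regardless of the makespan parameter, so your inequality $W(\smallJobs d,d')\le W(\smallJobs d,d)$ is in fact an equality, but this does not affect the argument.
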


We now construct a modified knapsack instance in order to apply this corollary for $d' = (1+4\epsilon)d$. First of all we reduce machine counts to $\rho$-compressed sizes for $\rho = \frac \epsilon 4$ . Consider $S_\rho$ and $b := \frac 1 \rho$ and let 
${\gamma'(j,s):= \max\{k \in S_\rho | k \le \gamma(j,s)\} }$
 for any job $j$ and $s \in \{\frac d 2, d\}$. With 
${\widetilde{p_j}(d) := \gamma'(j,\frac d 2)t(j,\gamma'(j,\frac d 2) - \gamma'(j,d)t(j,\gamma'(j,d)}$ denote the intermediary profit that is going to be further modified.

We further consider a job \textit{wide} in a shelf if it uses more than $b$ machines in the respective shelf, that is if $\gamma'(j,s)\ge b$ for the respective $s \in \{\frac d 2, d\}$. If a job is not wide we call it \textit{narrow} instead, with respect to some shelf. \looseness=-1

\begin{sloppypar}
For jobs that are narrow in both shelfs we will directly modify the profits. Let $j$ be a job with $\gamma'(j,s)< b$ for both $s\in \{\frac d 2, d\}$, then we round the intermediary profit up to the next multiple of $\epsilon d$ by setting 
${ p'_j(d) := \min\{i \epsilon d \,|\, i \epsilon d \ge\widetilde{p_j}(d)\text{ and }  i\in \mathbb{N}^*_{\le  \frac 2 {\epsilon^2}}\} }$. 
This is well defined since the original profit in this case is bounded by $w(j,\frac d 2)< b \frac d 2 = \frac 2 {\epsilon^2}\epsilon d$. For later arguments denote the modified work with $w'(j,\frac d 2 ):=w(j,\frac d 2)$ and $w'(j, d):=w'(j,\frac d 2) - p_j'(d)$.
\end{sloppypar}\looseness=-1

For jobs $j$ that are wide in both shelfs, that is when $\gamma'(j,\frac d 2) \ge \gamma'(j, d) \ge b$, we will modify the processing time. In particular we set $t'(j,s) := \frac 1 {1+4\rho} s$ for $s \in \{\frac d 2, d\}$, which results in modified work values $w'(j,s) := t'(j,s) \gamma'(j,s)$. We then define the new profit based on the modified works as: $p_j'(d):= w'(j,\frac d 2) - w'(j,d)$.\looseness=-1

That leaves jobs that are narrow in one shelf and wide in the other. Consider such a job $j$ with  $\gamma'(j,\frac d 2) \ge b > \gamma'(j,d)$. For the narrow version we round again the processing time  $t'(j,\frac d 2) := \frac{1}{1+4\rho} \frac d 2$ and obtain $w'(j,\frac d 2) := t'(j,\frac d 2) \gamma'(j,\frac d 2)$. As for the wide job we round down the work $w(j, \gamma'(j,d))$ to the next multiple of $i\epsilon d$. To be precise we set 
$w'(j,d):= \max\{i\epsilon d \,|\, i\epsilon d \le w(j,\gamma'(j,d)) \text{ and } i\in \mathbb{N}_{\le \frac 4 {\epsilon^2}} \} $. Note that the unmodified work is bounded by $ w(j,\frac d 2)\le w(j,d ) < bd = \frac 4 \epsilon  d  = \frac 4 {\epsilon^2}\epsilon d$. We then obtain the modified profit value $p_j'(d) = w'(j,\frac d 2) - w'(j,d)$. 

With these modified profits and sizes $s'_j(d)= \gamma'(j,d)$ we then solve the resulting problem $KP'(\bigJobs d,m,d,\rho)$ to obtain an optimal item set $J'$.\looseness=-1

\begin{lemma}\label{Lem:}
Let $J'$ be a solution to $KP'(\bigJobs d,m,d,\rho)$ and $d' = (1+4\epsilon)d$, then with unmodified processing times and machine numbers $J'$ is also a solution to $KP(\bigJobs d,m,d')$. Furthermore if there is a schedule with makespan $d$, we have that $W(J',d') \le md' - W(\smallJobs d ,d)$. \looseness=-1
\end{lemma}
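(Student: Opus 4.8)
The plan is to prove the two claims separately, but both rely on the same underlying principle: the modifications to sizes, profits, and processing times were designed so that (1) the modified machine counts $\gamma'(j,s) \le \gamma(j,s)$ stay valid for the \emph{original} problem with a larger target $d'$, and (2) the modified work values under-estimate the true work values of a real schedule for $d'$, while the modified profits are chosen so the knapsack optimum $J'$ is at least as good as the optimum of the original instance. First I would verify that $J'$ is feasible for $KP(\bigJobs d, m, d')$. Feasibility here only concerns the knapsack capacity constraint $\sum_{j\in J'} s'_j(d) \le m$; since $s'_j(d) = \gamma'(j,d) \le \gamma(j,d) \le \gamma(j,d')$ (the latter by monotonicity of $\gamma$ in $d$, as $d' \ge d$), and $\sum_{j \in J'} s'_j(d) \le m$ already holds for $KP'$, the same set satisfies the capacity of $KP(\bigJobs d,m,d')$. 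So the first sentence is essentially immediate.

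The substantive part is the bound $W(J',d') \le md' - W(\smallJobs d, d)$ under the hypothesis that a schedule of makespan $d$ exists. The strategy is: by Lemma~\ref{Lem:schedtoks} applied at makespan $d$, there is a set $J^* \subseteq \bigJobs d$ that is feasible for $KP(\bigJobs d, m, d)$ with $W(J^*,d) \le md - W(\smallJobs d, d)$. I would then show that $J^*$ (or rather the corresponding item set with the modified sizes) is feasible for $KP'(\bigJobs d, m, d, \rho)$: its modified sizes $\gamma'(j,d) \le \gamma(j,d)$ only shrink, so the capacity constraint is preserved. Since $J'$ is an \emph{optimal} solution of $KP'$, its modified profit is at least the modified profit of $J^*$, i.e. $\sum_{j\in J'} p'_j(d) \ge \sum_{j \in J^*} p'_j(d)$. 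Using the identity $W(J', d') = \sum_{j\in\bigJobs d} w(j,\gamma(j,d'/2)) - \sum_{j\in J'} p_j(d')$ rewritten in terms of modified quantities — i.e. defining the "modified total work" $W'(J', d) := \sum_{j\in\bigJobs d} w'(j,\frac d2) - \sum_{j\in J'} p'_j(d)$ and noting $W'(J',d) \le W'(J^*, d)$ by optimality — I reduce the goal to comparing $W(J',d')$ with $W'(J',d)$ on one side and $W'(J^*,d)$ with $md - W(\smallJobs d, d)$ on the other.

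The heart of the argument, and the main obstacle, is the case analysis on job types to control the gap between the true work $W(J',d')$ at makespan $d'$ and the modified bookkeeping $W'(J',d)$. For a job $j \in J'$ (scheduled in $S_1$) I need $w(j, \gamma(j, d'/2)) - w(j,\gamma(j,d')) \ge p'_j(d)$ essentially, i.e. the real profit at $d'$ dominates the modified profit; for $j \in \bigJobs d \setminus J'$ (scheduled in $S_2$) I need the real work at $\gamma(j, d'/2)$ to not exceed the modified work $w'(j,\frac d2)$ by too much. For \emph{narrow/narrow} jobs, rounding the profit \emph{up} to a multiple of $\epsilon d$ means $p'_j(d) \ge \widetilde{p_j}(d)$, which must be reconciled against the real profit — here one uses that $d' = (1+4\epsilon)d$ gives slack $4\epsilon d$ per... no, rather the slack is absorbed globally. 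For \emph{wide/wide} jobs, setting $t'(j,s) = \frac{1}{1+4\rho}s$ exploits Lemma~\ref{Lem:Comp}: the true processing time at $\gamma'(j,s)$ is at most $(1+4\rho)$ times that at $\gamma(j,s) \le$ something at makespan-scaling; one checks $w'(j,s) = \frac{s}{1+4\rho}\gamma'(j,s) \le$ true work at the $d'$-scaled shelf. For the \emph{mixed} jobs, combine both tricks. In each case I expect to lose at most an additive $\epsilon d$ (from the rounding of works/profits to multiples of $\epsilon d$) per job, and summing over at most $n$ big jobs in $\bigJobs d$ gives a total loss of at most $n\epsilon d$; since every big job uses $\ge 1$ machine this is absorbed by $m \cdot (d' - d) = 4\epsilon m d \ge 4 \epsilon n d$ worth of extra capacity-area... but more carefully, big jobs in the two-shelf schedule occupy area, and the extra $4\epsilon m d$ of total area budget when going from $d$ to $d'=(1+4\epsilon)d$ covers both the $n\epsilon d$ rounding slack and the $(1+4\rho)$ processing-time inflation of wide jobs. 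Assembling these inequalities: $W(J',d') \le W'(J',d) + (\text{slack} \le 4\epsilon m d) \le W'(J^*,d) + 4\epsilon m d \le (md - W(\smallJobs d, d)) + 4\epsilon m d = md' - W(\smallJobs d, d)$, which is exactly the claim. The delicate points to get right are the direction of each rounding, the exact constant in the compression (Lemma~\ref{Lem:Comp} with $\rho = \epsilon/4$ giving factor $1+\epsilon$), and confirming that $W(\smallJobs d, d)$ is unchanged because small jobs are untouched.
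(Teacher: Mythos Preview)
Your argument for the first claim is incorrect. You write $\gamma'(j,d)\le\gamma(j,d)\le\gamma(j,d')$, but the second inequality goes the wrong way: since $d'\ge d$, fewer machines are needed to achieve the larger processing-time bound, so $\gamma(j,d')\le\gamma(j,d)$. More importantly, the inequality you actually need is $\gamma(j,d')\le\gamma'(j,d)$, because feasibility for $KP(\bigJobs d,m,d')$ is the constraint $\sum_{j\in J'}\gamma(j,d')\le m$, and what you have from $KP'$ is $\sum_{j\in J'}\gamma'(j,d)\le m$. This inequality is \emph{not} immediate from monotonicity alone: $\gamma'(j,d)$ was obtained by rounding $\gamma(j,d)$ \emph{down}, so a priori $\gamma(j,d')$ could land either above or below $\gamma'(j,d)$. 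The paper closes this gap via compression: for wide jobs, Lemma~\ref{Lem:Comp} gives $t(j,\gamma'(j,d))\le(1+4\rho)t(j,\gamma(j,d))\le(1+4\rho)d\le d'$, so $\gamma'(j,d)$ machines already achieve processing time $\le d'$, whence $\gamma(j,d')\le\gamma'(j,d)$. For narrow jobs no rounding occurred, so $\gamma'(j,d)=\gamma(j,d)\ge\gamma(j,d')$. You dismiss this part as ``essentially immediate'', but the compression step is the whole point.

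For the second claim your overall plan matches the paper's: invoke Lemma~\ref{Lem:schedtoks} to get $J^*$, observe $J^*$ is feasible for $KP'$ since modified sizes only shrink, use optimality of $J'$ to get $W'(J',d)\le W'(J^*,d)$, and then relate modified to true work. The paper organises the final accounting slightly differently from your purely additive scheme: it shows $w(j,s)\le(1+\rho)(1+4\rho)\,w'(j,s)$ for wide jobs and an extra additive $\epsilon d$ for narrow jobs, bounds $W(J',d)$ (not $W(J',d')$) by $(1+\rho)(1+4\rho)W(J^*,d)+|J'|\epsilon d$, uses $|J'|\le m$ and $(1+\rho)(1+4\rho)+\epsilon\le 1+4\epsilon$, and only at the very end applies $W(J',d')\le W(J',d)$ from monotonicity of work. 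Your additive ``$4\epsilon md$ slack'' heuristic is compatible with this, but your case analysis is too vague to stand as a proof: in particular, the multiplicative $(1+\rho)(1+4\rho)$ loss on wide jobs is not an $\epsilon d$-per-job error, and you need $W'(J',d)\le md$ (or the paper's route through $W(J^*,d)$) to convert it to the additive form you want.
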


\begin{proof}
For the first part we have to show that all jobs in $J'$ fit into the respective knapsack when a processing time of $d'$ or $\frac {d'} 2$ for each shelf is allowed. Consider all jobs $j\in J'$ with $\gamma(j,d)\le b$ and take note that these jobs have the same size in both knapsack instances, since $\gamma(j,d') \le \gamma(j,d)$. For any of the wide jobs $j\in J'$ we have that $t(j,\gamma'(j,d)) \le (1+4\rho)d \le d'$ and therefore $\gamma(j,d') \le \gamma'(j,d)$. We then get $\sum_{j\in J'} \gamma(j,d') \le \sum_{j\in J'} \gamma'(j,d) \le m$ since $J'$ solves the modified knapsack instance which has capacity $m$.\looseness=-1


Before we consider the total work of $J'$ we want to make some observations from our rounding: We reduced the number of machines for each job by rounding the sizes. This will only reduce the work of each job due to monotony compared to the original knapsack instance by Mounié et al.. We then only proceed to reduce work further for narrow jobs by at most $\epsilon d$ and reduce the processing time of wide jobs by a factor $\frac 1 {1+4\rho}$.

Note that setting $t'(j,s) = \frac 1 {1+4\rho} s$ for a wide job $j$ and shelf size $s$ is actually reducing processing time and this can be seen through an indirect proof. Assume therefore $t(j,\gamma'(j,s)) < \frac 1 {1+4\rho} s$ and let $s_{k+1} := \gamma'(j,d) $ and let $s_k$ be the next smaller size in $S_\rho$. Reducing the number of machines to $s_k$ is a compression and we then have $t(j,s_k) \le (1+4\rho) t(j,s_{k+1}) <s$. With this $\gamma'(j,s)$ was not chosen minimal. 

In general we have that $w'(j,s) \le w(j,s)$ and want to continue to give an upper bound on $w(j,s)$. Note that we may assume that processing times do not increase with increasing numbers of machines. Otherwise we could simply omit numbers of machines that increase processing times and always schedule on the smaller number. With this we get that $w(j,s) \le (1+\rho)\gamma'(j,s)t(j,\gamma'(j,s))$.

Note that for jobs $j$ in shelf 2 we only decrease the processing time if they are wide and therefore we get: 
\begin{equation*}
w(j,\frac d 2) \le (1+\rho)\gamma'(j,\frac d 2) (1+4\rho)t'(j,\gamma'(j,\frac d 2))=(1+\rho)(1+4\rho)w'(j,\frac d 2). 
\end{equation*}
For wide jobs $j$ in shelf 1 we do the same. However for narrow jobs of this shelf we reduce the work further by $\epsilon d$. Doing the same estimation for $w(j,d)$ that we did for $w(j,\frac d 2)$ and adding this additional increase, we can conclude that:
${w(j,d) \le (1+\rho)(1+4\rho)w'(j,d) +\epsilon d}$.


For the second part of the statement we get through lemma \ref{Lem:schedtoks} that there is an optimal solution $J^*$ to $KP(\bigJobs d,m,d)$ with $W(J^*,d) \le md - W(\smallJobs d,d)$. Further $J^*$ is also a feasible solution for the modified knapsack problem, since our modifications only reduce item sizes. Our modified knapsack instance, similar to the original one, will maximize knapsack profits, which in turn then minimizes total work of a two-shelf schedule with modified work values. Since $J'$ is an optimal solution of the modified instance, we have that the total modified work of $J'$ is larger than the modified work of $J^*$. To be precise we have:
\begin{equation*}
{ \sum_{j \in J'}{w'(j,d)} + \sum_{j\in \bigJobs{d} \backslash J^*} {w'(j, \frac d 2 )}) \le
 \sum_{j \in J^*}{w'(j,d)} + \sum_{j\in \bigJobs{d} \backslash J^*} {w'(j, \frac d 2 )}) }.
 \end{equation*}

We now can conclude that the total work of the two-shelf schedule implied by $J'$ is bound:\looseness=-1
\begin{align*}
W(J',d) &= \sum_{j \in J'}{w(j,d)} + \sum_{j\in \bigJobs{d} \backslash J'} {w(j, \frac d 2 )}\\
&\le \sum_{j \in J'}{((1+\rho)(1+4\rho)w'(j,d) +\epsilon d)} + \sum_{j\in \bigJobs{d} \backslash J'} {(1+\rho)(1+4\rho)w'(j, \frac d 2 )}\\
&\le |J'| \epsilon d +  (1+\rho)(1+4\rho)(\sum_{j \in J'}{w'(j,d)} + \sum_{j\in \bigJobs{d} \backslash J'} {w'(j, \frac d 2 )})\\
&\le |J'| \epsilon d +  (1+\rho)(1+4\rho)(\sum_{j \in J^*}{w'(j,d)} + \sum_{j\in \bigJobs{d} \backslash J^*} {w'(j, \frac d 2 )})\\
&\le |J'| \epsilon d +  (1+\rho)(1+4\rho)(\sum_{j \in J^*}{w(j,d)} + \sum_{j\in \bigJobs{d} \backslash J^*} {w(j, \frac d 2 )})\\
&\le m \epsilon d +  (1+\rho)(1+4\rho)(md -W(\smallJobs d ,d)) \\
&\le (1+4\epsilon)md -W(\smallJobs d ,d) = md' -W(\smallJobs d ,d) 
\end{align*}\looseness=-1
Lastly due to monotony of work we have also that $W(J',d') \le W(J',d)$, which concludes the proof.\looseness=-1

\end{proof}

\subsection{Solving the knapsack problems}
	

As we already mentioned we intend to use an algorithm from Axiotis and Tzamos \cite{axiotis19}. Their algorithm works in two main steps. In the first step the items of the knapsack instance are partitioned into sets containing items of equal size. The knapsack problem is then solved for each item set separately and for every item size $s$ with item set $I_s = \{i \in I \,|\, s_i = s\}$ a solution array $R_s$ is generated where $R_s[t']$ denotes the maximum profit achievable for a knapsack of size $t' \le t$ using only items with size $s$. Note that by the nature of this problem $R_s[t']$ will always be given by the sum of profits of the $\lfloor \frac{t'}{s} \rfloor$ items with the highest profit in $I_s$.\looseness=-1

These solution arrays $R_s $ have a special structure as $R_s[k\cdot s] = R_s[k\cdot s +s']$ for all $s'< s$ and $k\in \mathbb{N}$. Further considering the unique entries we have that $R_s[(k+1)\cdot s]-R_s[k\cdot s] \ge R_s[(k+2)\cdot s]-R_s[(k+1)\cdot s]$ for each $k$, since the profit of the items added decreases. This structure is also called \textit{$s$-step concave} as the unique entries build a concave sequence. 
In the second step of their algorithm they combine the solution arrays in sequential order via convolution to generate a final solution array $R = R_1 \oplus R_2 \oplus \cdots \oplus R_{[s_{max}]}$.\looseness=-1

A very important result from Axiotis and Tzamos is that if these convolutions are done in sequential order, then one sequence will always be $s$-concave for some respective $s$. They proved in their paper that convolution with one $s$-step-concave sequence can be done in linear time, opposed to the best known quadratic time.\looseness=-1


\begin{lemma}[\cite{axiotis19}]\label{lem:concaveConv}
Given any sequence $A$ and $R_h$ for some $h\in \mathbb{N}$, each with $t$ entrys, we can compute the convolution $A\oplus R_h$ in time $O(t)$. \looseness=-1
\end{lemma}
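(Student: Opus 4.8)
The plan is to exploit the special structure of the solution arrays produced in the first step of the Axiotis–Tzamos algorithm, namely that the array $R_h$ associated with a fixed item size $h$ is $h$-step concave: it is constant on each block $[kh, (k+1)h)$ and its block values $R_h[kh]$ form a concave sequence in $k$. I would first make this structure fully explicit: writing $q_k := R_h[kh]$, we have $q_0 \le q_1 \le q_2 \le \cdots$ with non-increasing increments $q_{k+1}-q_k$, because the $(k+1)$-st item added to a knapsack of size $h$ is the one of $(k+1)$-st largest profit in $I_h$. This is the concave-sequence property that makes the convolution tractable.

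Next I would reduce the convolution $C = A \oplus R_h$ to a collection of simpler subproblems indexed by the residue class modulo $h$. For a fixed target index $t'$, the definition gives $C[t'] = \max_{j \le t'} (A[j] + R_h[t'-j])$. Since $R_h[t'-j]$ only depends on $\lfloor (t'-j)/h \rfloor$, I would group the choices of $j$ by the value $\ell = \lfloor (t'-j)/h \rfloor$ and rewrite the maximum as $\max_{\ell \ge 0} \big( q_\ell + \max_{j : \lfloor (t'-j)/h\rfloor = \ell} A[j] \big)$. A cleaner way to phrase this, which I would adopt, is to decompose $C$ according to the residue $r = t' \bmod h$: on each residue class the problem becomes a one-dimensional $(\max,+)$-convolution of the subsequence $(A[r], A[r+h], A[r+2h], \dots)$ — after taking prefix maxima over each length-$h$ window so that $A$ is replaced by a suitably "smoothed" sequence — with the concave sequence $(q_0, q_1, q_2, \dots)$. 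The total length of these $h$ residue-class subproblems is $O(t)$, and the window prefix-maxima preprocessing also costs $O(t)$ overall.

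Then I would invoke the classical fact that the $(\max,+)$-convolution of an arbitrary sequence with a concave sequence can be computed in linear time. This is exactly the SMAWK / totally-monotone-matrix phenomenon: the matrix $M[i][\ell] = (\text{smoothed }A)[i-\ell] + q_\ell$ is (inverse) totally monotone because $q$ is concave, so each of its row maxima can be found in amortized constant time by the SMAWK algorithm, giving $O(\text{length})$ per residue class. Summing over the $h$ residue classes gives $O(t)$ in total, which is the claimed bound. I would state this concave-convolution subroutine as the engine and cite its standard form.

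The main obstacle I expect is not any single deep step but rather the bookkeeping of getting the residue-class decomposition and the window prefix-maxima exactly right, so that the reduction genuinely lands on a concave-vs-arbitrary $(\max,+)$-convolution and the per-class lengths sum to $O(t)$ rather than $O(t)$ per class (which would give $O(th)$). Concretely, one must be careful that when $j$ ranges so that $t'-j$ stays within a single block $[\ell h, (\ell+1)h)$, the contribution is $q_\ell$ plus a maximum of $A$ over an interval of length at most $h$ ending at $t'-\ell h$; handling these sliding-window maxima in $O(t)$ total (e.g. via a monotone deque) and then feeding the result into SMAWK is the delicate part. Once that is set up, the concavity of $q$ does the rest, and linearity follows immediately.
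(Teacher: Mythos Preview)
The paper does not actually prove this lemma: it is stated with attribution to Axiotis and Tzamos and used as a black box, so there is no in-paper proof to compare against. Your sketch is nonetheless a faithful reconstruction of the argument in the cited reference: the $h$-step concavity of $R_h$, the decomposition of the convolution into residue classes modulo $h$, the sliding-window maxima to collapse each block of $R_h$ to a single value, and the application of SMAWK to the resulting concave $(\max,+)$-convolution are exactly the ingredients Axiotis and Tzamos use, and your accounting that the $h$ residue-class subproblems have total length $O(t)$ is the correct reason the bound is linear rather than $O(th)$.
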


In our setting the knapsack capacity is given by $t=m$. Thanks to our rounding we only have $|S_\rho|$ different item sizes, which defines the number of convolutions we have to calculate. We however must also compute the initial solutions that consist of the highest profit items for each size. Thanks to rounding item profits we can also sort these efficiently to generate the initial solutions arrays $R_h$.\looseness=-1



\begin{lemma}
Given a modified knapsack instance $KP'(\bigJobs d,m,d,\rho)$, we can compute for all $t \le t$ the entry $R_h[t']$ in time 
$ O(n+m(\frac 1\epsilon + \frac{\log (\epsilon m)}{\epsilon}))$.\looseness=-1
\end{lemma}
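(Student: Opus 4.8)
The plan is to reduce the construction of the arrays $R_h$, $h\in S_\rho$, to sorting, within each size class, the items of that class by profit. Recall from the discussion preceding the lemma that $R_h[t']$ depends only on $\lfloor t'/h\rfloor$ and that $R_h[kh]$ is exactly the sum of the $k$ largest profits among the items of size $h$. Hence the algorithm is: (i) bucket the big jobs of $KP'(\bigJobs d,m,d,\rho)$ by their knapsack size $s'_j(d)=\gamma'(j,d)\in S_\rho$; (ii) inside each class $h$ with item set $I_h$, produce the list of the $\min(\lfloor m/h\rfloor,|I_h|)$ largest profits in non-increasing order; (iii) take prefix sums of this list; (iv) fill the length-$(m+1)$ array $R_h$ by writing the $k$-th prefix sum into positions $kh,\dots,kh+h-1$. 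Step (i) is one pass in $O(n)$, since $\gamma'(j,d)$ is available from the construction of the item. Step (iv) costs $O(m)$ per class, hence $O(m|S_\rho|)=O(m(\tfrac1\epsilon+\tfrac{\log(\epsilon m)}\epsilon))$ over all $h\in S_\rho$, which already accounts for the second term of the claimed bound, and step (iii) is dominated by it. So everything hinges on carrying out step (ii) for all classes together in $O(n+m|S_\rho|)$.

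For step (ii) I would use the rounding built into $KP'$ to replace comparison sorting by counting/radix sort. For a job that is narrow in shelf $1$ (so $\gamma'(j,d)<b$) the relevant sort key is essentially integral: either $p'_j(d)$ is itself a multiple $i\,\epsilon d$ with $i\le 2/\epsilon^2$, or $p'_j(d)=w'(j,\tfrac d2)-w'(j,d)$ where $w'(j,d)$ is a multiple of $\epsilon d$ and the remaining part $\tfrac{1}{1+4\rho}\tfrac d2\,\gamma'(j,\tfrac d2)$ is governed by $\gamma'(j,\tfrac d2)\in S_\rho$; either way a constant number of counting-sort passes (on the $O(1/\epsilon^2)$ possible $\epsilon d$-multiples, on the $\le b$ narrow sizes, and where needed on the $|S_\rho|$ values of $\gamma'(j,\tfrac d2)$, followed by a short merge) puts the narrow classes in non-increasing profit order. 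For a job that is wide in shelf $1$ (so $\gamma'(j,d)=h\ge b$) the size $h$ pins down the shelf-$1$ work up to $O(1/\epsilon^2)$ possibilities, so ordering such a class by profit is equivalent to ordering it by a bounded-range key built from that $\epsilon d$-multiple together with $\gamma'(j,\tfrac d2)\in S_\rho$, again a radix sort. Summing these costs over all classes and using $|S_\rho|=O(m)$ — which holds both when $m\le b$, where $S_\rho=[m]$, and when $m>b$, where $\tfrac{\log(\epsilon m)}\epsilon=O(m)$ — keeps the total within $O(n+m|S_\rho|)$; and whenever $\lfloor m/h\rfloor$ is smaller than the key range of class $h$ one instead extracts only its top $\lfloor m/h\rfloor$ profits by selection, which also fits since $\sum_{h\in S_\rho}\lfloor m/h\rfloor=O(m|S_\rho|)$.

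The main obstacle is exactly this accounting. The profits of the jobs that are wide in at least one shelf are not a priori multiples of $\epsilon d$, so the clean ``profits are bounded integers'' picture breaks down, and one must argue that fixing the size class fixes (or almost fixes) the shelf-$1$ contribution of the profit, leaving only an $S_\rho$-indexed shelf-$2$ term and, for the mixed jobs, one rounded $\epsilon d$-multiple free; and one must separate the regimes $m\le b$ and $m>b$ (and fall back to direct selection when $\lfloor m/h\rfloor$ is tiny) so that the radix-sort ranges are genuinely absorbed by $O(n+m|S_\rho|)$ rather than producing stray $\mathrm{poly}(1/\epsilon)$ terms. Once the per-class profit lists are sorted, steps (iii) and (iv) are immediate, and the total running time is $O(n+m(\tfrac1\epsilon+\tfrac{\log(\epsilon m)}\epsilon))$ as claimed.
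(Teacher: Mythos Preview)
Your proposal is correct and follows the same core idea as the paper: exploit the rounding built into $KP'$ so that profits are determined by bounded integer keys, replace comparison sort by radix/bucket sort, and then fill the $|S_\rho|$ arrays of length $m+1$.

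The one noteworthy difference is in how the sorting is organised. You bucket by size class first and then sort each class separately, which forces you into the per-class accounting you flag as the ``main obstacle'' (summing radix ranges over all $h\in S_\rho$, splitting the regimes $m\le b$ versus $m>b$, falling back to selection when $\lfloor m/h\rfloor$ is small). The paper instead sorts \emph{all} items by profit in one pass, ignoring size classes: it does three independent integer sorts on the three rounding types (narrow--narrow profits are multiples of $\epsilon d$ with index $\le 2/\epsilon^2$, radix in $O(n_1+1/\epsilon)$; wide--wide profits scale to $\tfrac12\gamma'(j,\tfrac d2)-\gamma'(j,d)\in[0,m]$, bucket in $O(n_2+m)$; mixed profits scale to integers $\le m/\epsilon^2$, radix with two digits of range $m/\epsilon$ in $O(n_3+m/\epsilon)$), then merges the three sorted lists in $O(n)$. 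With one globally sorted list you can sweep it once and, for each size $h$, peel off the next-largest profit whenever you see an item of that size, so the per-class prefix sums fall out without any per-class sorting overhead. This is slightly cleaner and makes the $O(n+m/\epsilon)$ sorting bound immediate, absorbing the stray $\mathrm{poly}(1/\epsilon)$ terms you were worried about into the single $m/\epsilon$ term. Your route reaches the same bound but with more bookkeeping.
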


\begin{proof}
Our goal is to sort items by profits and subsequently add up the highest profits to fill the arrays $R_h$. We will sort items based on how they were rounded:\looseness=-1

Consider jobs $j$ with $\gamma'(j,s)< b$ for both $s\in \{\frac d 2, d\}$ and denote the number of these jobs with $n_1$. By scaling their profits with $\frac 1 \epsilon \frac 1 d $ we obtain profits of the form $\tilde{p}_j(d)= i$ for some  $i\in \mathbb{N}_{\le  \frac 2 {\epsilon^2}}$. We can sort profits using radix sort in time $O(n_1+\frac 1 \epsilon)$ where we encode them using $O(1)$ digits ranging from $0$ to $\frac 1 \epsilon$.\looseness=-1

Consider now the $n_2$ jobs $j$ with $\gamma'(j,\frac d 2) \ge \gamma'(j, d) \ge b$. If we scale the profit of these items with $\frac{1+4\rho}{d}$ then we have that $\tilde{p}_j(d) =\frac 1 2\gamma'(j,\frac d 2) - \gamma'(j,d) $. These items can be sorted by profit using bucket sort in $O(n_2+m)$. \looseness=-1

For the remaining $n_3$ of the jobs $j$ with $\gamma'(j,\frac d 2) \ge b > \gamma'(j,d)$ we have to consider the modified profits $p_j'(d) := \frac d {2(1+4\rho)}\gamma'(j,\frac d 2) - i\epsilon d $ for some $i\in \mathbb{N}$. We scale these profits with $\frac{2(1+\epsilon)}{d \epsilon^2}$ to obtain $\tilde{p}_j(d) = \gamma'(j,\frac d 2)\frac 1 {\epsilon^2} - \frac{2id}{\epsilon}  \le \frac{m}{\epsilon^2}$. These items can be sorted with radix sort in time $O(n_3+\frac m \epsilon)$ by encoding profits with two digits ranging from $0$ to $\frac m \epsilon$.\looseness=-1

Putting these three steps together takes time $O(n_1+n_2+n_3+\frac 1 \epsilon + m +\frac m \epsilon) = O(n+\frac m \epsilon)$. We can additionally merge the three sorted lists via merge sort in $O(n)$ and iterate through all items to fill the actual solution arrays. The number of total entries we have to fill in is at most $m(\frac 4\epsilon + \frac{\log (\epsilon m)}{\epsilon})$ since we have $m$ entries in each array, and one array for every item size.\looseness=-1
\end{proof}

Technically we only need the unique entries of these solution arrays to apply the algorithm \cite{polak21}. These could effectively be calculated in time $O(n+\frac m \epsilon)$ but combining all arrays will dominate the running time regardless.



\begin{corollary}\label{coro:conv}
We can compute $R_1 \oplus R_2 \oplus \cdots \oplus R_{|S_\rho|}$ in time $O(m(|S_\rho|))$.\looseness=-1
\end{corollary}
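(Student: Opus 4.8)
The plan is to unfold the telescoping convolution $R_1 \oplus R_2 \oplus \cdots \oplus R_{|S_\rho|}$ from left to right, maintaining a running partial result $A^{(k)} := R_1 \oplus \cdots \oplus R_k$, and to invoke Lemma \ref{lem:concaveConv} at each step. The crucial point to verify is that at each step the \emph{second} operand is one of the arrays $R_h$ — i.e.\ an $h$-step-concave array arising from a single item size — so that Lemma \ref{lem:concaveConv} (which requires exactly this shape for its second argument, while the first argument $A$ is arbitrary) is applicable. Concretely, I would set $A^{(1)} = R_1$ and, for $k = 2, \ldots, |S_\rho|$, compute $A^{(k)} = A^{(k-1)} \oplus R_k$; here $A^{(k-1)}$ plays the role of the generic sequence $A$ and $R_k$ plays the role of $R_h$ in Lemma \ref{lem:concaveConv}. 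Each $R_h$ has exactly $t = m$ entries (the knapsack capacity), and the partial convolutions are likewise truncated to $m$ entries, so every individual call costs $O(m)$.

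Next I would count the calls. The arrays are indexed by the distinct item sizes actually occurring, and by construction all item sizes lie in $S_\rho$ (the modified sizes are $s'_j(d) = \gamma'(j,d) \in S_\rho$), so there are at most $|S_\rho|$ arrays and hence $|S_\rho| - 1$ convolution steps. Multiplying the per-step cost $O(m)$ by the number of steps $O(|S_\rho|)$ gives the claimed bound $O(m \cdot |S_\rho|)$. One should also note that reading off the optimum of the knapsack instance from the final array $R$ is an $O(m)$ scan and is therefore absorbed into this bound.

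The one genuine thing to justify — and the main (mild) obstacle — is the associativity/correctness of performing the $(\max,+)$-convolution incrementally in this order, together with the observation that the running operand on the right is always a "fresh" single-size array $R_h$ rather than something whose concavity structure has been destroyed by previous convolutions. This is exactly the structural insight of Axiotis and Tzamos: it is the left operand that accumulates, while the right operand is always drawn directly from the pool $\{R_h\}$, so the $h$-step-concavity precondition of Lemma \ref{lem:concaveConv} is met at every step. Since $(\max,+)$-convolution is associative, the incrementally computed $A^{(|S_\rho|)}$ equals $R_1 \oplus \cdots \oplus R_{|S_\rho|}$, and the running time follows. I would state this as a one-line induction on $k$ rather than grinding through it in detail.
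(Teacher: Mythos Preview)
Your proposal is correct and matches the paper's approach exactly: the paper states this corollary without an explicit proof, relying on the immediately preceding discussion that there are $|S_\rho|$ distinct item sizes and that each sequential convolution with a single $R_h$ costs $O(m)$ by Lemma~\ref{lem:concaveConv}. Your write-up simply makes this reasoning explicit, including the observation that the accumulating left operand is arbitrary while the right operand is always a fresh step-concave $R_h$.
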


With this knapsack solution we can construct a schedule using corollary \ref{coro:final}. We note that this final construction using the procedure from Mounié et al.~\cite{mounieTrystram} can be implemented in time $O(n)$ by using rounded processing times\cite{jansenLandLinTime}. 
\looseness=-1



\begin{proof}[of Theorem \ref{Theo:ThreeHalf}]

We apply the dual approximation framework, which means we compute an upper bound for $d$ in time $O(n \log^2(\frac 1 \epsilon + \frac {\log (\epsilon m)} \epsilon))$.
We end up with $\log(\frac 1 \epsilon)$ candidates for $d$ and construct knapsack instances for all of them.

To do so we need to identify their machine count among compressed sizes. This can be done in $O(n\log(\frac 1 \epsilon + \frac{\log (\epsilon m)} \epsilon))$ via binary search. All further modifications to knapsack items can be done in $O(n)$. In total for all candidates these steps take time $O(n \log^2(\frac 1 \epsilon + \frac {\log (\epsilon m)} \epsilon))$.

Solving the resulting knapsack problem for one candidate can be done in time $O(m(\frac 1\epsilon + \frac{\log (\epsilon m)}{\epsilon})) \subseteq O(m\frac 1\epsilon \log (\epsilon m))$. By applying this to all candidates and since $m\le 16n$ we get a final running time of 
$O(n \log^2(\frac 1 \epsilon + \frac {\log (\epsilon m)} \epsilon) +  \frac{n}{\epsilon} \log(\frac 1 \epsilon) {\log (\epsilon m)}) $.




\looseness=-1

\end{proof}

\section{Implementation}

We implemented all algorithms introduced and used in this paper, along with a version of the algorithm introduced by Jansen and Land \cite{jansenLandLinTime}. We note that we did not implement the final version of their algorithm to solve Knapsack with compressible items, as it was very intricate and complicated. Instead our implementation computes their modified knapsack instance and solves it via their proposed dynamic programming approach. 

 The implementations and experiments were conducted on a Raspberry Pi 4 Model B and we limited the experiment to one CPU-core as we did not use any mean of parallelization. We uploaded a version of our implementation to GitHub (https://github.com/Felioh/MoldableJobScheduling). In the following we mainly tested for the part where $m\le 16n$ as we deem this the more relevant comparison between the two results. 

\subsection{Computational results}

As for test instances we generated sets of randomized instances for moldable job scheduling. Machine numbers mainly range from 30 to 100 and jobs from 10 to 120. We tested on these instances for $\epsilon = \frac 1 {10}$ and the results can be seen in the figures in the appendix. Figures \ref{fig:runtimeJobs} and \ref{fig:runtimeMach} show the difference of average runtime between our algorithm and the one by Jansen and Land. Note that the runtime of our algorithm is subtracted from the runtime of their algorithm. Hence we can see that our algorithm does slightly better for the analyzed number of jobs and machines and that our algorithm seems to scale better with growing numbers of machines and jobs.

In figures \ref{fig:msmach1} through \ref{fig:msmach3} we compare the average makespans of both algorithms to compare solution quality. In most cases that solution quality is generally quite similar but in some cases slightly better for our algorithm. We believe that our algorithm does better in regards to solution quality due to our rounding. For one our rounding of machine numbers to values in $S_\rho$ is in its core a compression but does not fully utilize the potential introduced in lemma \ref{Lem:Comp}. Since we do not reduce the machine counts by the maximal possible amount, our effective error is smaller. In a similar manner are the additional modifications of knapsack items mainly catered to achieving a simple structure that also keeps the additional error small.

\section{Conclusion}

In this paper we presented our new $\frac 3 2 + \epsilon$-approximation, that results from the combination of different techniques from moldable scheduling, knapsack and convolution. Our algorithm gives a theoretical improvement in terms of the known upper bound for this problem, but also proves to be faster in practice as shown by our experiments. An interesting takeaway from our result is that it is sufficient to reduce moldable scheduling to only a certain set of machine counts thanks to compression. In fact it is not necessary to regard all possible allotments, when one wants to find an approximate solution.


%
%
%
\bibliographystyle{splncs04}
\bibliography{lib}

\newpage
\begin{appendix}

\section{Computational Results (Graphs and Diagrams)}

\begin{figure}
\includegraphics[width=\textwidth]{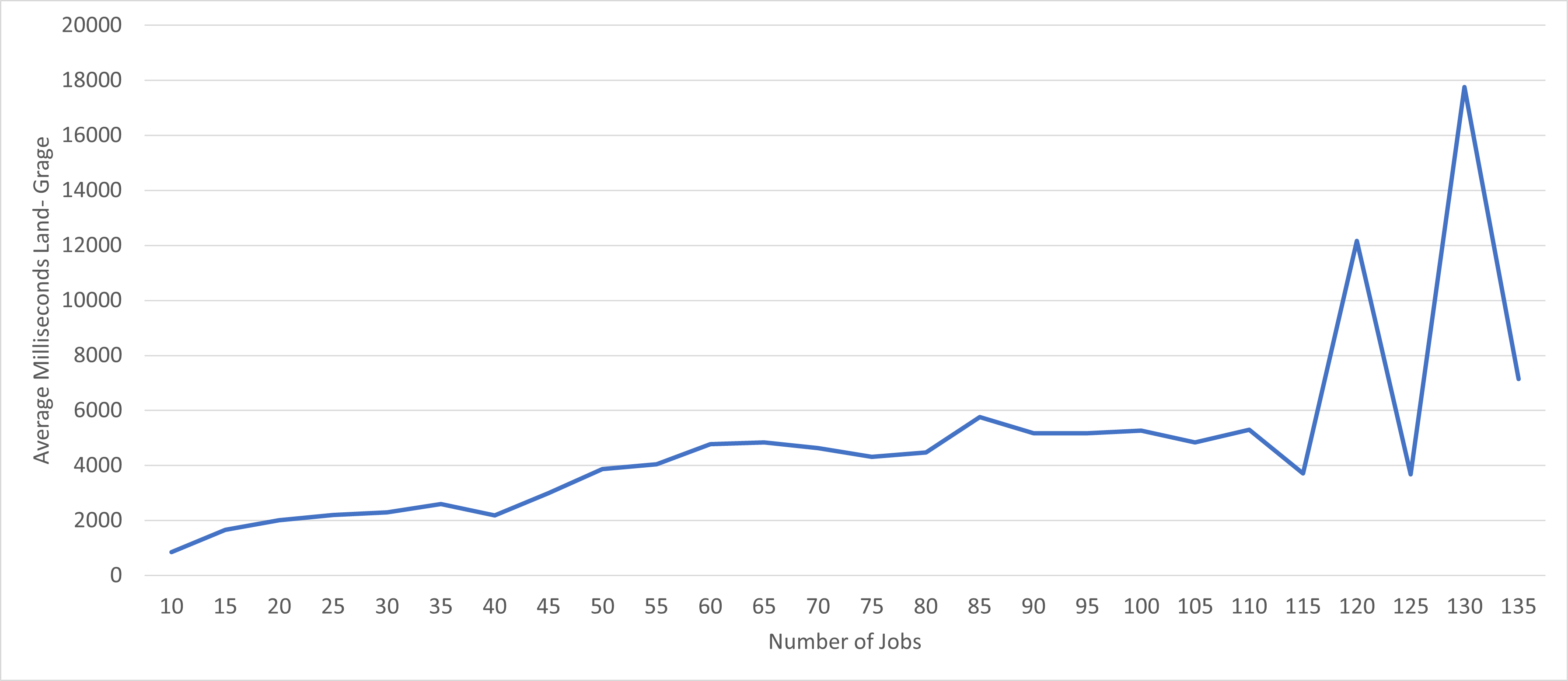}
\caption{Average runtime difference in relation to job numbers. } \label{fig:runtimeJobs}
\end{figure}

\begin{figure}
\includegraphics[width=\textwidth]{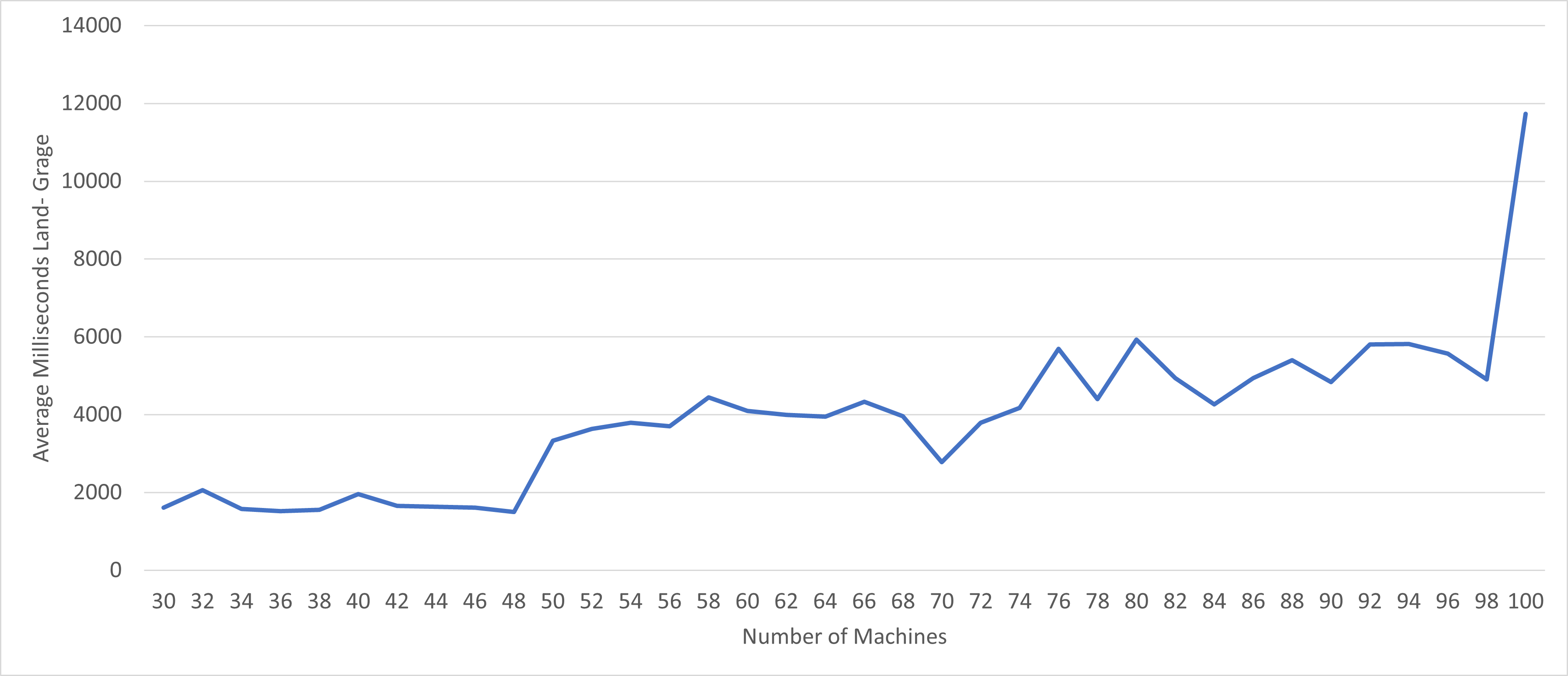}
\caption{Average runtime difference in relation to machine numbers.} \label{fig:runtimeMach}
\end{figure}

\begin{figure}
\includegraphics[width=\textwidth]{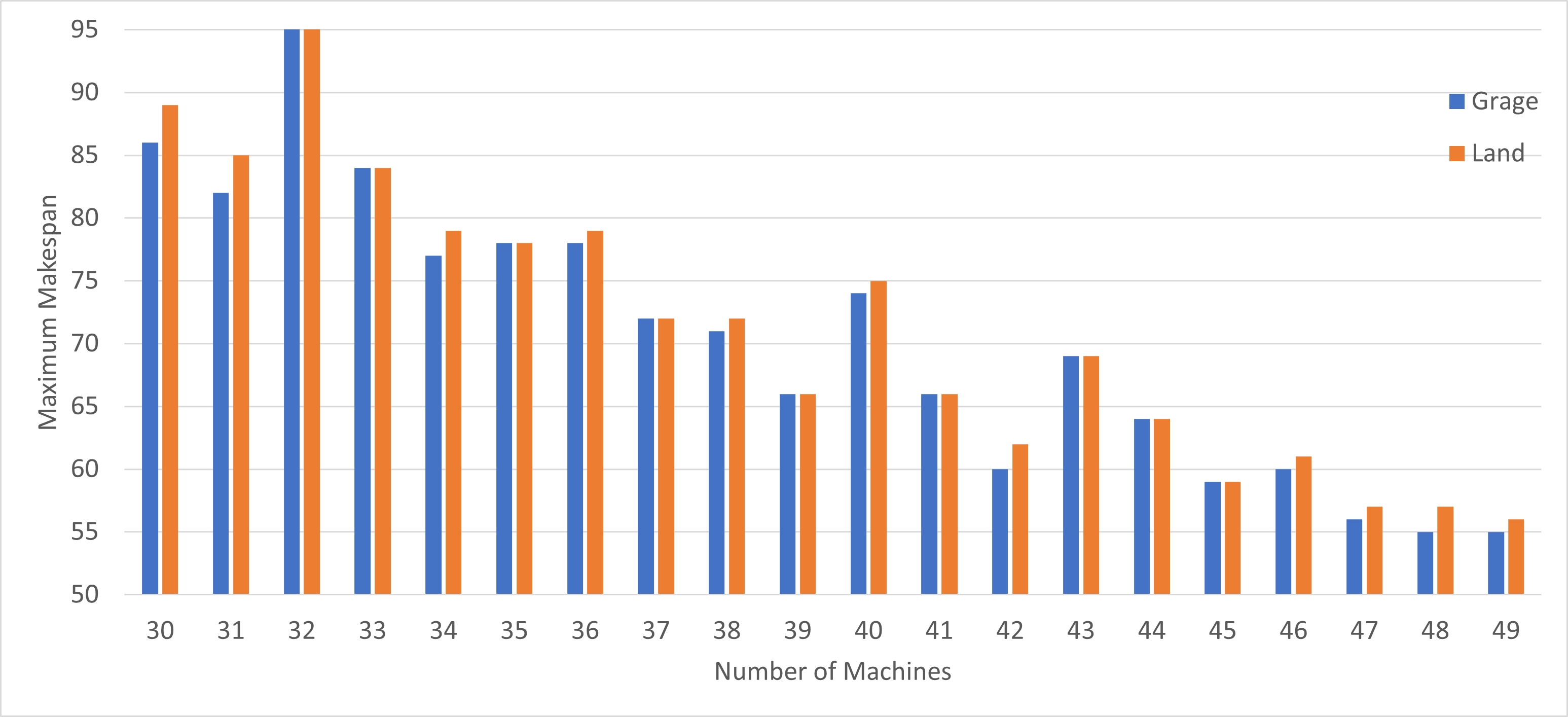}
\caption{Average makespan comparison limited to instances with same machines.} \label{fig:msmach1}
\end{figure}

\begin{figure}
\includegraphics[width=\textwidth]{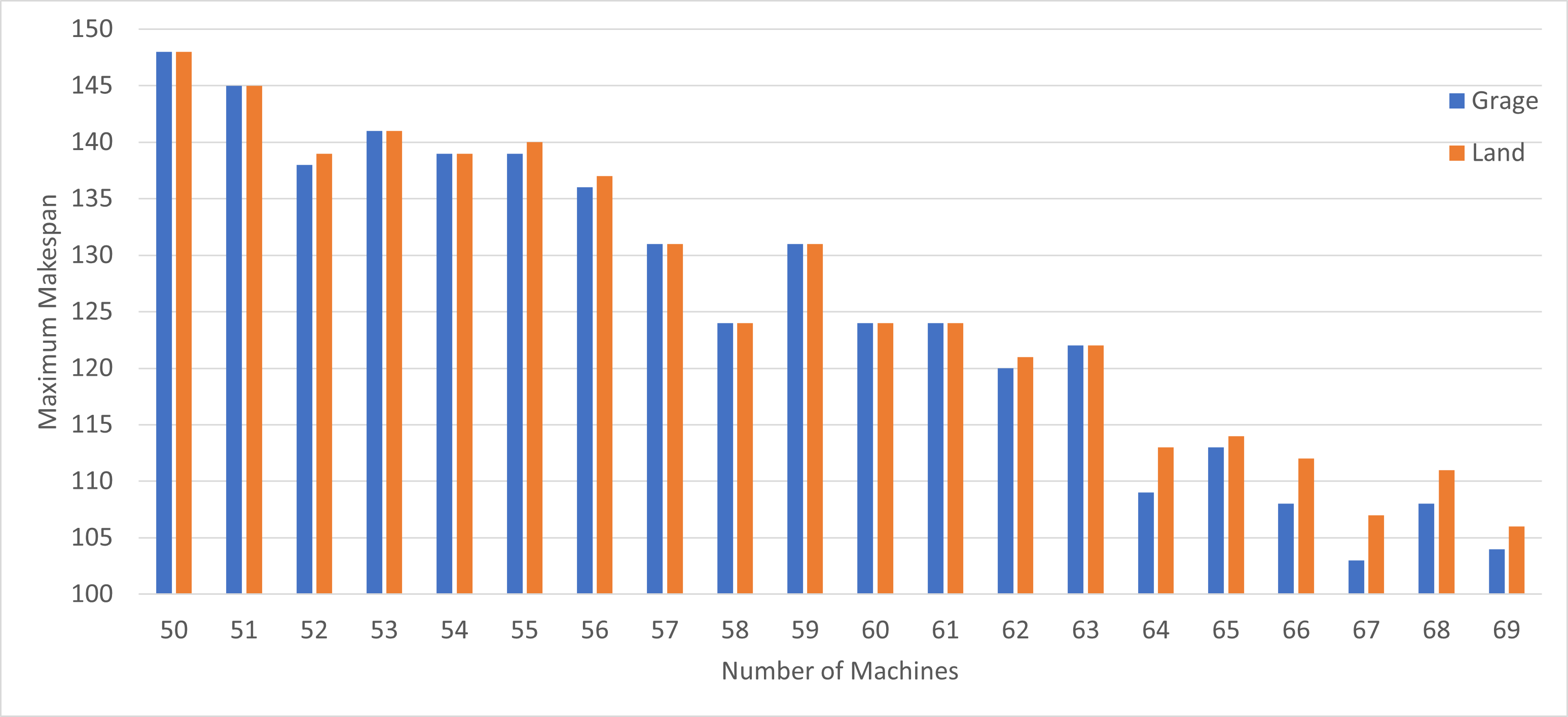}
\caption{Average makespan comparison limited to instances with same machines.} \label{fig:msmach2}
\end{figure}

\begin{figure}
\includegraphics[width=\textwidth]{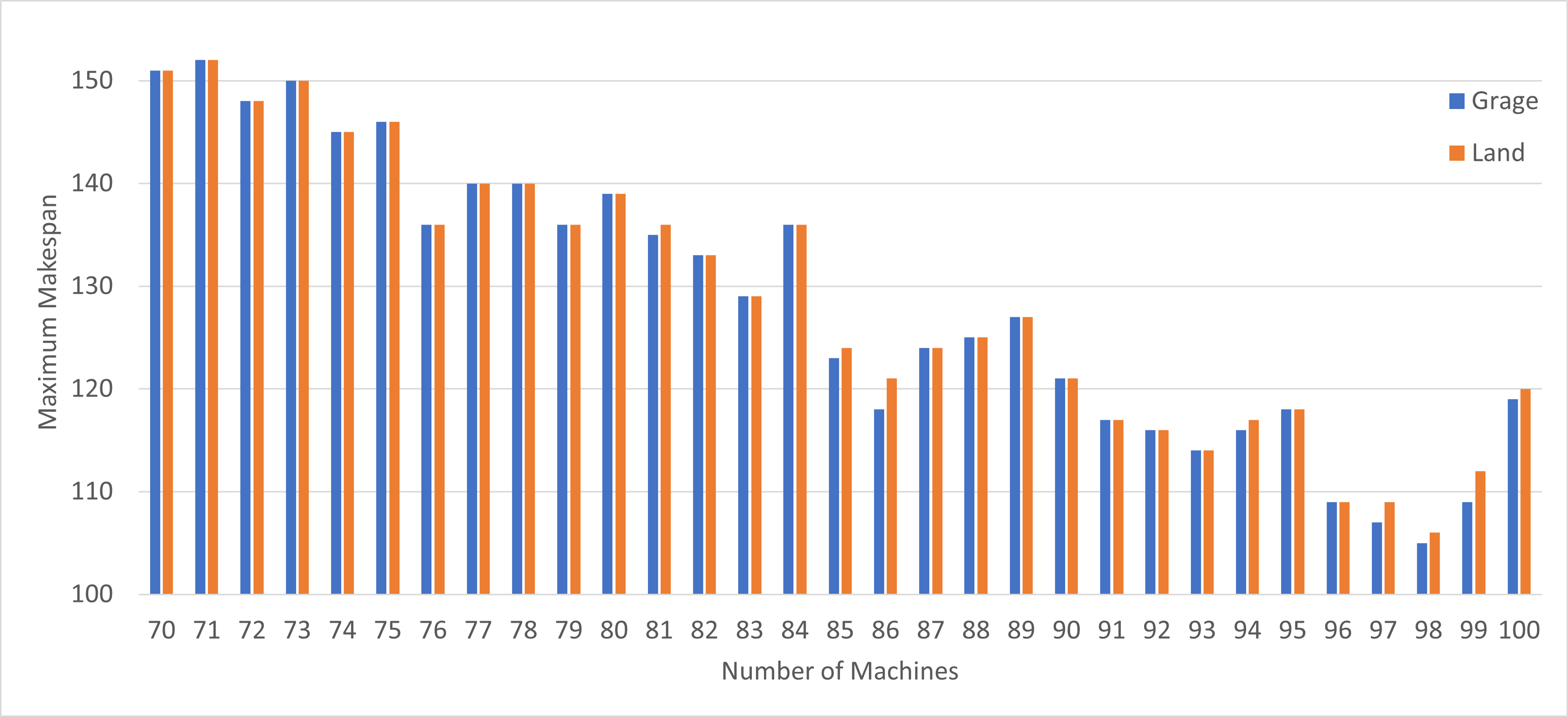}
\caption{Average makespan comparison limited to instances with same machines.} \label{fig:msmach3}
\end{figure}

\end{appendix}
\end{document}